\newtheorem{theo}{Theorem}[section]
\newtheorem{rem}{Remark}[section]
\newtheorem{defi}{Definition}[section]
\newtheorem{lemma}{Lemma}[section]
\newtheorem{prop}{Proposition}[section]
\newtheorem{ex}{Example}[section]
\newtheorem{op}{Open Problem}
\newtheorem{construction}{Construction}
\newcommand{\proof}{\noindent{\em Proof.}\quad}
\newcommand{\F}{\mathbb{F}}
\newcommand{\oo}{\overline}
\newcommand{\FB}{{\mathbb F}_{2}}
\def\whitebox{{\hbox{\hskip 1pt
        \vrule height 6pt depth 1.5pt
        \lower 1.5pt\vbox to 7.5pt{\hrule width
                  3.2pt\vfill\hrule width 3.2pt}%
        \vrule height 6pt depth 1.5pt
        \hskip 1pt } }}
\def\qed{\ifhmode\allowbreak\else\nobreak\fi\hfill\quad\nobreak\whitebox\medbreak}
\begin{document}
\title{Generic constructions of 5-valued spectra Boolean functions}

\author{
S. Hod\v zi\'c \footnote {University of Primorska, FAMNIT, Koper,
Slovenia, e-mail: samir.hodzic@famnit.upr.si}\and E.~Pasalic\footnote{
University of Primorska, FAMNIT \& IAM, Koper, Slovenia, e-mail: enes.pasalic6@gmail.com}
\and W. G. Zhang \footnote{ ISN Laboratory, Xidian University, Xi'an 710071, China,
e-mail: weiguozhang@vip.qq.com}
}
\date{}
\maketitle

\begin{abstract}
Whereas the design and properties of bent and plateaued functions have been frequently addressed during the past few decades, there are only a few  design methods of so-called 5-valued spectra Boolean functions whose Walsh spectra takes the values in $\{0, \pm 2^{\lambda_1}, \pm 2^{\lambda_2}\}$. Moreover, these design methods mainly regards the specification of these functions in their ANF (algebraic normal form) domain. In this article, we give a precise characterization of this class of functions in their spectral domain using the concept of a dual of plateaued functions. Both necessary and sufficient conditions on the Walsh support of these functions are given which then connects their design (in spectral domain) to a family of so-called {\em totally (non-overlap) disjoint spectra plateaued functions}. We identify some suitable families of plateaued functions having this property, thus providing some generic methods in the spectral domain. Furthermore, we also provide an extensive analysis of their constructions in the ANF domain and provide several generic design methods. The importance of this class of functions is manifolded, where apart from being suitable for some cryptographic applications we emphasize their property of being constituent functions in the so-called 4-bent decomposition.
\newline \newline
\noindent
\textbf{Keywords:}  Bent functions, Plateaued functions, 5-valued spectra functions, Compositional representation, Walsh support, Dual functions.

\end{abstract}

\section{Introduction}

The concept of bent functions has been introduced by Rothaus \cite{Rot} as a family of Boolean functions possessing several nice combinatorial properties which allowed for their great range of applications such as design theory, coding theory, sequences, cryptography to mention a few. An exhaustive survey on bent functions related to their design and properties can be found in \cite{CarlMes2016}. Another related class of Boolean functions on $GF(2)^n$ whose Walsh spectra is three valued, thus $W_f(\omega) \in \{0, \pm 2^{r}\}$ for $r >\lceil n/2 \rceil$, is called a class of plateaued functions introduced by Zhang and Zheng \cite{Zheng}.
The design methods of plateaued functions have been addressed in several works in the past twenty years \cite{CarletAPN,CarletQ,Feng2,SihemPOP2014,THOMCusick2016,SihemM2017AMC,Fengrong2018IT} and their design in spectral domain recently in \cite{SHCF,SHEA}.

On the other hand, so called five-valued spectra Boolean functions is an interesting class of Boolean functions in its own. Being characterized by the property that their Walsh spectral values are in the set $\{0,\pm 2^{\lambda_1}, \pm 2^{\lambda_2}\}$ (where $\lceil \frac{n}{2} \rceil \leq \lambda_1 < \lambda_2 < n$) when considering an $n$-variable input space, these functions may satisfy multiple cryptographic criteria. Indeed, achieving the smallest possible $\lambda_i$ for odd $n$, namely having  $\lambda_1= \frac{n-1}{2}$ and  $\lambda_2= \frac{n+1}{2}$ ensures the same nonlinearity as those of semi-bent functions, and additionally their zero spectral values can be potentially suitably allocated so that they posses certain order of resiliency as well. Apart from their cryptographic applications, their design might contribute to a better understanding of other related combinatorial structures. In particular, one of the greatest problem related to some cryptographically significant functions is the existence of APN (almost perfect nonlinear) permutations for even $n \geq 8$. It is notable that the only known APN permutation over $\FB^6$ (up to equivalence), provided by Dillon \cite{Dillon}, has a 5-valued extended Walsh spectra (the spectra of all component functions) with the values $\{0,\pm 2^{\frac{n}{2}}, \pm 2^{\frac{n+2}{2}}\}$. Another important motivation regards the  so-called 4-bent decomposition \cite{Decom}.  In general,  a bent functions on $\F_2^n$ is then  viewed through  its (four) restrictions to the cosets  of some  $(n-2)$-dimensional linear subspace. These restrictions are then either bent, semi-bent or 5-valued spectra functions. Whereas the first two cases are well understood and addressed in the literature, the design of quadruples of 5-valued spectra functions suitable in the design of bent functions has not been investigated in detail.

To the best of our knowledge, the main existing research related to the  design of 5-valued spectra functions can be traced to the early work of Maitra and Sarkar \cite{Maitra} and some  recent articles \cite{CaoXu,XuCao,SihemM2017AMC}. However, all these methods address the design problem only partially, either by providing some simple methods in the ANF domain or by giving some sporadic classes  of 5-valued spectra functions through trace representation \cite{XuCao}. These methods are not generic and they do not give any useful insight that would contribute to a general design framework for this class of functions. In this article, we primarily address the notion of 5-valued spectra functions in the Walsh spectral domain which seems a natural approach to handle the design issue. More precisely, by requesting that the cardinalities of their Walsh supports w.r.t. to two different amplitudes are powers of two (thus $\#S_1(f)=\#\{\omega \in \FB^n: W_f(\omega) = \pm 2^{\lambda_1} \}=2^{r_1}$ and $\#S_2(f)=\#\{\theta \in \FB^n: W_f(\theta)\} = \pm 2^{\lambda_2} \}=2^{r_2}$) we relate their design to the design of plateaued functions. In general, the disjoint Walsh supports $S_1(f)$ and $S_2(f)$  induce the dual Boolean functions and the spectral design then relates to the following problem: Specify two disjoint subsets of $\FB^n$ of cardinality $2^{r_1}$ and $2^{r_2}$ respectively, and define the signs of the spectral values in $S_1(f)$ and $S_2(f)$ so that the resulting function is indeed a Boolean function! We provide both necessary and sufficient conditions regarding this specification which relies on a class of {\em totally disjoint spectra functions}. This notion is somewhat similar to a class of so-called {\em non-overlap disjoint spectra functions} introduced in \cite{DisjSpectra}  but is even more restrictive. We provide one generic solution to this problem by specifying a huge class of totally disjoint spectra functions that allows us to accomplish our goal.
This design approach is conducted entirely in the spectral domain and to the best of our knowledge this is a unique and novel design framework offering a broad generality and design flexibility.

In the second part of this article, we consider the possibility of specifying 5-valued spectra in the ANF domain. For this purpose, and again to provide a great variety of design methods, we employ so-called composite form (CF) representation of Boolean functions introduced recently in \cite{SHCF}. This representation is shown to be useful for deriving sufficient conditions on the initial function used in our construction methods,  so that the resulting functions are 5-valued spectra functions. We demonstrate that these conditions can be relatively easily satisfied and provide some generic examples (construction methods for every $n$). The advantage of this approach is that, in difference to the spectral design,  it gives  an explicit ANF form of a function without the need of getting it through the inverse Walsh transform. It should be remarked that  our design methods can be easily adopted for specifying four suitable 5-valued spectra functions that can be used to build bent functions. Due to a great variety of our design methods it is challenging to assume that the constructed bent functions do not necessarily belong to the known primary classes. The problem of confirming this exclusion is however difficult and needs to be considered separately.

 The rest of this article is organized as follows. In Section \ref{sec:pre}, we give some basic definitions related to Boolean functions and discuss the concept of dual of plateaued Boolean functions. In terms of the 4-bent decomposition we provide both necessary and sufficient conditions on 5-valued spectra restrictions of a bent functions in Section \ref{sec:dual+dec}. In Section \ref{sec:speccon}, we introduce the concept of totally disjoint spectra functions whose existence is shown to be both necessary and sufficient for designing 5-valued spectra in the spectral domain. One generic class of such function is also presented in this section, thus enabling a generic spectral design method. Some design methods in the ANF domain, based on the use of composite representation of Boolean functions are given in Section \ref{sec:CF}. In addition, the possibility of designing resilient 5-valued spectra functions using the so-called GMM method is addressed in Section \ref{sec:MM}. Some concluding remarks are given in Section \ref{sec:concl}.


\section{Preliminaries}\label{sec:pre}

%
The vector space $\mathbb{F}_2^n$ is the space of all $n$-tuples $x=(x_1,\ldots,x_n)$, where $x_i \in \mathbb{F}_2$.
For $x=(x_1,\ldots,x_n)$ and $y=(y_1,\ldots,y_n)$  in $\mathbb{F}^n_2$, the usual scalar (or dot) product over $\mathbb{F}_2$ is defined as $x\cdot y=x_1 y_1\oplus\cdots\oplus x_n y_n.$ The Hamming weight of  $x=(x_1,\ldots,x_n)\in \mathbb{F}^n_2$ is denoted and computed as  $wt(x)=\sum^n_{i=1} x_i.$ By "$\sum$" we denote the integer sum (without modulo evaluation), whereas "$\bigoplus$" denotes the sum evaluated modulo two.

The set of all Boolean functions in $n$ variables, which is the set of mappings from $\mathbb{F}_2^n$ to $\mathbb{F}_2$, is denoted by $\mathcal{B}_n$.  Especially, the set of affine functions in $n$ variables is given by $\mathcal{A}_n=\{a\cdot x\oplus b\;|\;a\in\mathbb{F}_2^n,\; b\in\{0,1\}\},$ and similarly  $\mathcal{L}_n=\{a\cdot x:a\in\mathbb{F}_2^n\}\subset \mathcal{A}_n$ 
 denotes the set of linear functions.
It is well-known that any $f:\mathbb{F}^n_2 \rightarrow \mathbb{F}_2$ can be uniquely represented by its associated algebraic normal form (ANF) as follows:
\begin{eqnarray}\label{ANF}
f(x_1,\ldots,x_n)={\bigoplus_{u\in \mathbb{F}^n_2}{\lambda_u}}{(\prod_{i=1}^n{x_i}^{u_i})},
\end{eqnarray}
where $x_i, \lambda_u \in \mathbb{F}_2$ and $u=(u_1, \ldots,u_n)\in \mathbb{F}^n_2$.

For an arbitrary function $f\in \mathcal{B}_n$, the set of its values on $\mathbb{F}^n_2$ (\emph{the truth table}) is defined as $T_f=(f(0,\ldots,0,0),f(0,\ldots,0,1),f(0,\ldots,1,0),\ldots,f(1,\ldots,1,1))$. The corresponding $(\pm 1)$-{\em sequence of $f$} is defined as
$\chi_f=((-1)^{f(0,\ldots,0,0)},(-1)^{f(0,\ldots,0,1)},(-1)^{f(0,\ldots,1,0)}\ldots,$ $(-1)^{f(1,\ldots,1,1)})$. The {\em Hamming distance} $d_H$ between two arbitrary Boolean functions, say $f,g\in \mathcal{B}_n,$ we define by $d_H(f,g)=\{x\in \mathbb{F}^n_2:f(x)\neq g(x)\}=2^{n-1}-\frac{1}{2}\chi_f\cdot \chi_g$, where $\chi_f\cdot \chi_g=\sum_{x\in \mathbb{F}^n_2}(-1)^{f(x)\oplus g(x)}$.


The \emph{Walsh-Hadamard transform} (WHT) of $f\in\mathcal{B}_n$, and its inverse WHT, at any point $\omega\in\mathbb{F}^n_2$ are defined, respectively,  by
\begin{eqnarray}\label{WHT}
W_{f}(\omega)=\sum_{x\in \mathbb{F}_2^n}(-1)^{f(x)\oplus \omega\cdot x},\;\;\;\;(-1)^{f(x)}=2^{-n}\sum_{\omega\in \mathbb{F}_2^n}W_f(\omega)(-1)^{\omega\cdot x}.
\end{eqnarray}


\subsection{Bent and plateaued functions and their duals}

Throughout this article we use the following definitions related to bent and plateaued functions:
\begin{itemize}
\item A function $f\in\mathcal{B}_n,$ for even $n$, is called {\em bent} if $W_f(u)=2^{\frac{n}{2}}(-1)^{f^*(u)}$
for a Boolean function $f^*\in \mathcal{B}_n$ which is also a bent function, called the {\it dual} of $f$.


\item Two functions $f$ and $g$ on $\FB^n$ are said to be at {\em bent distance} if $d_H(f,g)= 2^{n-1}\pm 2^{n/2-1}$. Similarly, for a subset $B\subset \mathcal{B}_n$, a function $f$ is said to be at bent distance to $B$ if for all $g\in B$ it holds that $d_H(f,g)=2^{n-1}\pm 2^{n/2-1}$.

\item A function $f\in \mathcal{B}_n$ is called {\em $s$-plateaued} if its Walsh spectrum only takes three values $0$ and $\pm 2^{\frac{n+s}{2}}$ (the value $2^{\frac{n+s}{2}}$ is called {\em amplitude}), where $s\geq 1$ if $n$ is odd and $s\geq 2$ if $n$ is even ($s$ and $n$ always have the same parity).
A class of $1$-plateaued functions for $n$ odd, or  $2$-plateaued for $n$ even, corresponds to so-called {\em semi-bent} functions.

\item The {\em Walsh support} of  $f\in \mathcal{B}_n$ is defined as $S_f=\{\omega\in \mathbb{F}^n_2\; :\; W_f(\omega)\neq0\}$ and for an $s$-plateaued function its cardinality is $\#S_f=2^{n-s}$ \cite[Proposition 4]{Decom}.


 \item
A {\em dual} function $f^*$ of an $s$-plateaued $f \in \mathcal{B}_n$ is defined through $W_f(\omega)=2^{\frac{n+s}{2}}(-1)^{f^*(\omega)},$ for $\omega\in S_f$.
     %
     %
To specify the dual function as $f^*:\FB^{n-s} \rightarrow \FB$ we use the concept of {\em lexicographic ordering}. That is, a subset  $E=\{e_0,\ldots,e_{2^{n-s}-1}\}\subset \mathbb{F}^{n}_2$ is ordered lexicographically if $|e_i| < |e_{i+1}|$ for any $i \in [0,2^{n-s}-2]$, where $|e_i|= \sum_{j=0}^{n-1}e_{i,n-1-j} 2^j$ denotes the integer representation of $e_i \in \mathbb{F}^n_2$.
Since $S_f$ is not ordered in general, {\em we will always represent it} as $S_f=v \oplus E$, where $E$ is lexicographically ordered for some fixed  $v \in S_f$ and $e_0=\textbf{0}_{n}$. For instance, if $S_f=\{(0,1,0),(0,1,1),(1,0,0), (1,0,1)\}$, by fixing $v=(0,1,1)\in S_f$, then $E=\{e_0,e_1,e_2,e_3\}=\{(0,0,0), (0,0,1),(1,1,0),(1,1,1)\}$ is ordered lexicographically and consequently $S_f$ is "ordered" as $S_f=\{\omega_0,\omega_1,\omega_2,\omega_3\}=\{(0,1,1),(0,1,0),(1,0,1),(1,0,0)\}$. 

 A direct correspondence between  $\FB^{n-s}$ and $S_f=\{\omega_0,\ldots,\omega_{2^{n-s}-1}\}$ is achieved through $E$ so that for lexicographically ordered $\FB^{n-s}=\{x_0, x_1, \ldots, x_{2^{n-s}-1}\}$ we have
\begin{eqnarray}\label{DPL}
 f^*(\omega_j)\leftrightsquigarrow f^*(e_j)\leftrightsquigarrow f^*(x_j),\;\;\; x_j \in \FB^{n-s}, \; e_j \in E,\;j\in[0,2^{n-s}-1].
 \end{eqnarray}
For the above example, the identification is given by Table \ref{tab1}.
\begin{table}[h!]
\scriptsize
\centering
\begin{tabular}{|c|c|c|}
  \hline
  \makecell{$S_f=v\oplus E$ given with respect to\\lexicographically ordered\\ $E$ w.r.t. $v=(0,1,1)$} & \makecell{The lexicographically\\ ordered set $E$} & \makecell{Values of $f^*(\omega_i)=f^*(e_i)=f^*(x_i)$,\\
  where $x_i\in \mathbb{F}^{2}_2$\\ $(\mathbb{F}^{2}_2$ is ordered lexicographically)} \\ \hline
  $\omega_0=(0,1,1)$ & $(0,0,0)=e_0$ & $f^*(\omega_0)=f^*(x_0)=f^*(0,0)$ \\
  $\omega_1=(0,1,0)$ & $(0,0,1)=e_1$ & $f^*(\omega_1)=f^*(x_1)=f^*(0,1)$ \\
  $\omega_2=(1,0,1)$ & $(1,1,0)=e_2$ & $f^*(\omega_2)=f^*(x_2)=f^*(1,0)$ \\
  $\omega_3=(1,0,0)$ & $(1,1,1)=e_3$ & $f^*(\omega_3)=f^*(x_3)=f^*(1,1)$ \\
  \hline
\end{tabular}
\caption{Defining $f^*:\FB^{n-s} \rightarrow \FB^2$ via support $S_f$.}
\label{tab1}
\end{table}
\end{itemize}

\section{5-valuated spectra functions in terms of bent decomposition}\label{sec:dual+dec}

In this section we define the dual of Boolean functions with $5$ values in their spectrum, as a natural extension of the concept of dual of plateaued functions (defined in Section \ref{sec:pre}). In addition, using the dual of 5-valued spectra functions we provide an alternative description of \cite[Theorem 7]{Decom} on decomposition of bent functions, which will be used in subsequent sections.

\subsection{On dual of 5-valuated spectra functions}\label{sec:dual5}

Let WHT spectrum of a function $f:\mathbb{F}^n_2\rightarrow \mathbb{F}_2$ contains the values $0,\pm c_1,\pm c_2$ ($c_1\neq c_2$), where $c_1,c_2\in \mathbb{N}$. For $i=1,2$, by $S^{[i]}_f\subset \mathbb{F}^n_2$ we denote the set $S^{[i]}_f=\{u\in \mathbb{F}^n_2:|W_f(u)|=c_i\}$, and  we define functions $f^*_{[i]}:S^{[i]}_f\rightarrow \mathbb{F}_2$ such that the following equality holds:
\begin{eqnarray}\label{eq:wdual5}
W_f(u)=\left\{
\begin{array}{cc}
 0, & u\not\in S^{[1]}_f\cup S^{[2]}_f, \\
 c_i\cdot (-1)^{f^*_{[i]}(u)}, & u\in S^{[i]}_f,\;\; i\in\{1,2\}\\
 \end{array}
\right..
\end{eqnarray}
Clearly, relation (\ref{eq:wdual5}) extends the definition of dual of a plateaued functions, since the functions $f^*_i$ are regulating the signs of integers $c_i$ in the spectrum (where $S^{[1]}_f\cap S^{[2]}_f=\emptyset$). Also note that the pairs $(S^{[1]}_f,f^*_{[1]})$ and $(S^{[2]}_f,f^*_{[2]})$ uniquely define the function $f$. Since throughout the article we will consider functions $f$ for which the sets $S^{[i]}_f$ are of the size $2^{\lambda_i}$ ($2^{\lambda_1}+2^{\lambda_2}<2^n$), we now provide the description of $f^*_{[i]}:S^{[i]}_f\rightarrow \mathbb{F}_2$ as functions from $\mathbb{F}^{\lambda_i}_2$ to $\mathbb{F}_2$.

For $i=1,2$, let $v_i\in \mathbb{F}^n_2$ and $E_i=\{e^{(i)}_0,\ldots,e^{(i)}_{2^{\lambda_i}-1}\}\subset \mathbb{F}^n_2$ ($e^{(i)}_0=\textbf{0}_n$) be lexicographically ordered subsets such that $S^{[i]}_f=\{\omega^{(i)}_0,\ldots,\omega^{(i)}_{2^{\lambda_i}-1}\}=v_i\oplus E_i$, where $\omega^{(i)}_j=v_i\oplus e^{(i)}_j$, for $j\in[0,2^{\lambda_i}-1]$. Clearly, the lexicographically ordered set $E_i$ imposes an ordering on $S^{[i]}_f$ with respect to equality $\omega^{(i)}_j=v_i\oplus e^{(i)}_j$. Using the representation of $S^{[i]}_f=v_i\oplus E_i$, the function $f^*_{[i]}$ as a mapping from $\mathbb{F}^{\lambda_i}_2$ to $\mathbb{F}_2$ is defined as
\begin{eqnarray}\label{eq:dp}
f^*_{[i]}(\omega^{(i)}_j)\leftrightsquigarrow f^*_{[i]}(e^{(i)}_j)\leftrightsquigarrow f^*_{[i]}(x_j),\;\;j\in[0,2^{\lambda_i}-1],\;\;i=1,2,
\end{eqnarray}
where $\mathbb{F}^{\lambda_i}_2=\{x_0,\ldots,x_{2^{\lambda_i}-1}\}$ is ordered lexicographically. In the next subsection we proceed with the analysis of decomposition of bent functions in terms of duals of its restrictions and corresponding Walsh supports.


\subsection{Decomposition of bent functions}\label{sec:platechar}

The decomposition of bent functions on $\FB^n$, $n$ is even, to affine subspaces  $a\oplus V$, for some $k$-dimensional linear subspace $V\subset \mathbb{F}^n_2$, was considered in \cite{Decom}.
For a bent function $\mathfrak{f} \in \mathcal{B}_n$, the restriction to $a\oplus V$ is denoted by $\mathfrak{f}_{a\oplus V}$ and it can be viewed as a function from $\FB^k \rightarrow \F_2$ using
\begin{eqnarray}\label{eq:rest}
\mathfrak{f}_{a\oplus V}(a\oplus v_i)\leftrightsquigarrow \mathfrak{f}_{a\oplus V}(v_i)\leftrightsquigarrow \mathfrak{f}_{a\oplus V}(x_i),\;\;\;i\in [0,2^k-1],
\end{eqnarray}
for lexicographically ordered $V=\{v_0,\ldots,v_{2^k-1}\}$ and $\mathbb{F}^k_2=\{x_0,\ldots,x_{2^k-1}\}$.
This identification between $V$ and $\mathbb{F}^k_2$, and thus the definition of $\mathfrak{f}_{a\oplus V}:\mathbb{F}^k_2\rightarrow \mathbb{F}_2$, strongly depends on the ordering of $V$. The reason why we use the lexicographic ordering above is  \cite[Lemma 3.1-$(ii)$]{SHCF} and any ordering of $V$ which satisfies the property \cite[Lemma 3.1-$(i)$]{SHCF} can be used instead.
\begin{defi}  Let $\mathfrak{f} \in \mathcal{B}_n$ and let $V$ be a linear subspace of $\mathbb{F}_2^n$
of dimension $k$. The decomposition of $\mathfrak{f}$ with respect to $V$ is the sequence $\{\mathfrak{f}_{a\oplus V}: a \in Q\}$ where $Q \oplus V=\mathbb{F}_2^n$ and all $\mathfrak{f}_{a\oplus V}$ are considered as Boolean functions in $\mathcal{B}_k$.
 \end{defi}
The 4-decomposition of a bent function $\mathfrak{f}\in \mathcal{B}_{n}$ then defines four subfunctions  on the four cosets  of some $(n-2)$-dimensional linear subspace \cite{Decom}. More precisely, for nonzero
$\alpha,\beta \in \mathbb{F}^n_2$ with $\alpha \neq \beta$ this $(n-2)$-dimensional subspace  is defined as $V=\langle \alpha,\beta \rangle^{\perp}$, where
 the {\em dual} of a linear subspace, say $S\subset \mathbb{F}^n_2$, is defined as $S^\perp=\{x\in \mathbb{F}^n_2: x\cdot y = 0,\;\forall y\in S\}.$

Let $(f_1,f_2,f_3,f_4)$ be such a decomposition, that is, $f_1, \ldots, f_4 \in \mathcal{B}_{n-2}$ are defined on the four cosets $\textbf{0}_n\oplus V, \alpha \oplus V, \beta \oplus V,(\alpha \oplus \beta) \oplus V$ respectively, thus $Q=\langle \alpha,\beta \rangle $
and $Q\oplus V=\mathbb{F}^n_2$ ($Q\cap V=\{\textbf{0}_n\}$).
Such a decomposition is called  a {\em bent 4-decomposition} when all $f_i$ ($i\in[1,4]$), are bent;  a {\em semi-bent 4-decomposition} when all $f_i$ ($i\in[1,4]$) are semi-bent;   a {\em 5-valued 4-decomposition} when all $f_i$ ($i\in[1,4]$) are 5-valued spectra functions so that $W_{f_i} \in \{0, \pm 2^{(n-2)/2}, \pm 2^{n/2}\} $ \cite{Decom}. These are the only possibilities and we strictly have that all the restrictions have the same spectral profile, for instance the restrictions cannot be a mixture of bent and semi-bent functions. The different cases that arise could be related to the second order derivatives (with respect to $\alpha$ and $\beta$) of the dual   $\mathfrak{f}^*$ of a bent function $ \mathfrak{f}$ \cite{Decom}.

Before we proceed further, we recall that following result which is actually \cite[Lemma 3.1-$(ii)$]{SHCF}.
\begin{lemma}\cite{SHCF}\label{lema1}
Let $S=\{\omega_{0},\ldots,\omega_{2^m-1}\}\subseteq \mathbb{F}^k_2$ be any  affine subspace of dimension $m\geq 2$ such that $S=v\oplus E$, for some lexicographically ordered linear subspace $E=\{e_0,\ldots,e_{2^m-1}\}\subseteq \mathbb{F}^k_2$ and $v\in S_f$, where $\omega_{i}=v\oplus e_i$ for $i\in [0,2^m-1]$. Then for an arbitrary vector $u\in \mathbb{F}^k_2$ it holds that
\begin{eqnarray}\label{hrow}
((-1)^{u\cdot \omega_{0}},(-1)^{u\cdot \omega_{1}},\ldots,(-1)^{u\cdot \omega_{2^m-1}})=(-1)^{\varepsilon_u} H^{(r_u)}_{2^{m}},
\end{eqnarray}
for some $0\leq r_u\leq 2^m-1$ and $\varepsilon_u\in\mathbb{F}_2$. In addition,  $\{T_{\ell}:\ell\in \mathcal{L}_m\}\subseteq \{(u\cdot e_0,\ldots,u\cdot e_{2^m-1}):u\in \mathbb{F}^k_2\}$, which means that  $\mathcal{L}_m$ is contained in a multi-set of $m$-variable linear functions whose truth tables are $\{(u\cdot e_0,\ldots,u\cdot e_{2^m-1}):u\in \mathbb{F}^k_2\}$.
\end{lemma}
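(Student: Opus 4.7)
The plan is to reduce the entire statement to the observation that, when $E$ is lexicographically ordered with $e_0 = \mathbf{0}_k$, the map $i \mapsto e_i$ (where $i \in \{0,1,\ldots,2^m-1\}$ is identified with its binary expansion in $\mathbb{F}_2^m$) is a group isomorphism from $(\mathbb{F}_2^m, \oplus)$ onto $(E,\oplus)$. Granted this, $u\cdot e_i$ becomes a linear function of the binary digits of $i$, which is exactly what produces Hadamard rows in part one and all linear functions in $\mathcal{L}_m$ in part two.

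First I would establish the structural claim by putting $E$ in reduced row echelon form. Let $p_1<p_2<\cdots<p_m$ be the pivot positions and $b_1,\ldots,b_m$ the RREF basis (so $b_j$ has a $1$ at position $p_j$, a $0$ at every other pivot position, and zeros strictly to the left of $p_j$). For $e=\bigoplus_{j=1}^m c_j b_j$ the bit of $e$ at $p_j$ is exactly $c_j$. Moreover, all bits strictly to the left of $p_1$ vanish, and between consecutive pivots only basis vectors $b_1,\ldots,b_j$ contribute. A bit-by-bit comparison then shows that $e$ precedes $e'$ lexicographically if and only if $(c_1,\ldots,c_m)$ precedes $(c'_1,\ldots,c'_m)$ with $c_1$ as the most significant bit. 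Hence $e_i = \bigoplus_{j=1}^m c_j^{(i)}\, b_j$, where $(c_1^{(i)},\ldots,c_m^{(i)})$ is the binary expansion of $i$, which is exactly the claimed isomorphism.

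With this in hand, fix $u\in \mathbb{F}_2^k$ and set $\ell_u=(u\cdot b_1,\ldots,u\cdot b_m)\in \mathbb{F}_2^m$. Then $u\cdot e_i = \ell_u \cdot x_i$ for every $i$, where $\mathbb{F}_2^m=\{x_0,\ldots,x_{2^m-1}\}$ is lexicographically ordered. Therefore $((-1)^{u\cdot e_0},\ldots,(-1)^{u\cdot e_{2^m-1}})$ is precisely the row $H^{(r_u)}_{2^m}$ with $r_u=|\ell_u|$. Since $\omega_i=v\oplus e_i$ gives $(-1)^{u\cdot \omega_i}=(-1)^{u\cdot v}(-1)^{u\cdot e_i}$, choosing $\varepsilon_u = u\cdot v$ yields \eqref{hrow}. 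For the second assertion, linear independence of $b_1,\ldots,b_m$ in $\mathbb{F}_2^k$ makes the map $u\mapsto \ell_u$ a surjective linear map $\mathbb{F}_2^k\to\mathbb{F}_2^m$; hence every linear function on $\mathbb{F}_2^m$ appears as $x\mapsto \ell_u\cdot x$ for some $u$, giving the claimed inclusion.

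The main obstacle is the RREF-based argument that lex order on $E$ coincides with lex order on the coefficient vector $(c_1,\ldots,c_m)$. The delicate point is that the non-pivot bits of $e$ strictly between two consecutive pivots are nonzero linear combinations of the pivot bits $c_1,\ldots,c_j$ read so far, so they cannot distinguish two elements that already agree on those bits; verifying this carefully is where any sloppiness would break the proof. Once this bookkeeping is done, everything else is a direct translation through the isomorphism $i\mapsto e_i$.
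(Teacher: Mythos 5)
Your proof is correct. Note that the paper itself gives no argument for this lemma --- it is quoted verbatim from \cite[Lemma 3.1-$(ii)$]{SHCF} --- so there is no in-paper proof to compare against; your argument (reduce the basis of $E$ to row echelon form, check that the lexicographic order on $E$ then matches the binary expansion of the index $i$ so that $i\mapsto e_i$ is an isomorphism, deduce $u\cdot e_i=\ell_u\cdot x_i$ with $u\mapsto \ell_u=(u\cdot b_1,\ldots,u\cdot b_m)$ surjective by linear independence of the $b_j$) is a complete, self-contained and correct proof of both assertions, and it is the natural route one would expect the cited source to take. The one delicate step you flag --- that bits strictly between consecutive pivots depend only on the coefficients already fixed, so the first disagreement between $e$ and $e'$ occurs exactly at the pivot $p_j$ of the first differing coefficient --- is handled correctly.
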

Now we completely describe the 4-decomposition of a bent function in terms of Walsh supports and duals of its restrictions $f_{1},\ldots,f_4$.
\begin{theo}\label{th:dec}
Let $\mathfrak{f} \in \mathfrak{B}_n$ be a bent function, for even $n \geq 4$.  Let $\alpha,\beta \in {\mathbb{F}_2^n}^*$ $(\alpha\neq \beta)$ and  $V=\langle \alpha,\beta \rangle ^\perp$. If we denote by $(f_1,\ldots,f_4)$ the 4-decomposition of $\mathfrak{f}$ with respect to $V$, then $(f_1,\ldots,f_4)$ is:
\begin{enumerate}[i)]
\item A bent 4-decomposition if and only if it holds that $f^*_1\oplus f^*_2\oplus f^*_3\oplus f^*_4=1$.
\item A semi-bent 4-decomposition if and only if
functions $f_i$ $(i\in[1,4])$ are pairwise disjoint spectra  semi-bent functions. 
\item A five-valued 4-decomposition if and only if 
the following statements hold:
\begin{enumerate}[a)]
\item The sets $S^{[1]}_{f_i}=\{\vartheta\in \mathbb{F}^{n-2}_2:|W_{f_i}(\vartheta)|=2^{\frac{n}{2}}\}$ $(i\in[1,4])$ are pairwise disjoint;
\item All $S^{[2]}_{f_i}=\{\vartheta\in \mathbb{F}^{n-2}_2:|W_{f_i}(\vartheta)|=2^{\frac{n-2}{2}}\}$ are equal $(i\in[1,4])$, and for $f^*_{[2],i}:S^{[2]}_{f_i}\rightarrow \mathbb{F}_2$ it holds that $f^*_{[2],1}\oplus f^*_{[2],2}\oplus f^*_{[2],3}\oplus f^*_{[2],4}=1$. 
\end{enumerate}
\end{enumerate}
\end{theo}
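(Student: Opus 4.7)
The plan is to reduce all three parts to a single Hadamard--Parseval relation arising from the coset decomposition $\mathbb{F}_2^n = Q \oplus V$ with $Q = \langle \alpha, \beta \rangle$. First I would write $x \in \mathbb{F}_2^n$ uniquely as $x = q + v$ with $q \in Q$, $v \in V$ and split $W_\mathfrak{f}(u) = \sum_x (-1)^{\mathfrak{f}(x) + u \cdot x}$ accordingly. Since $V = Q^\perp$, the character $(-1)^{u \cdot q}$ simplifies to $(-1)^{u_Q \cdot q}$ where $u_Q$ is the $Q$-component of $u$, and Lemma \ref{lema1} applied to the lexicographically ordered $V$ identifies the inner sum $\sum_{v \in V}(-1)^{f_q(v) + u \cdot v}$ with $(-1)^{\varepsilon_u} W_{f_q}(u^*)$, where $u^* \in \mathbb{F}_2^{n-2}$ is a linear functional that depends only on $u$ modulo $Q$. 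The resulting key identity
\begin{eqnarray*}
W_\mathfrak{f}(u) \;=\; (-1)^{\varepsilon_u} \sum_{q \in Q} (-1)^{u_Q \cdot q} W_{f_q}(u^*)
\end{eqnarray*}
is, for fixed $u^*$, a $4\times 4$ Hadamard transform of the vector $(W_{f_1}(u^*), \ldots, W_{f_4}(u^*))$ as $u_Q$ ranges over $Q$. Squaring and summing over $u_Q$ and invoking Hadamard--Parseval yields the local identity $\sum_{i=1}^4 W_{f_i}(u^*)^2 = 2^n$ for every $u^* \in \mathbb{F}_2^{n-2}$ (since $\mathfrak{f}$ is bent with $|W_\mathfrak{f}| = 2^{n/2}$).

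Each of the three parts is then a case analysis of how the prescribed magnitudes of $W_{f_i}$ can fit this equality. For part (i), writing $W_{f_i}(u^*) = 2^{(n-2)/2}\epsilon_i(u^*)$ with $\epsilon_i = (-1)^{f_i^*}$, each lift of $u^*$ produces a signed sum of four $\pm 1$'s which must have absolute value $2$ to match $|W_\mathfrak{f}| = 2^{n/2}$. Since four $\pm 1$'s sum to $\pm 2$ iff their product is $-1$, and the product of entries in any row of the $4\times 4$ Hadamard matrix is $+1$, the row-sign factors cancel and all four lifts enforce the single condition $\prod_i \epsilon_i(u^*) = -1$, i.e.\ $f_1^* \oplus f_2^* \oplus f_3^* \oplus f_4^* = 1$. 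For part (ii) with $W_{f_i}(u^*)^2 \in \{0, 2^n\}$, the local Parseval forces exactly one term to be nonzero, which is pairwise disjointness of the Walsh supports; the converse is immediate since a single nonzero term of magnitude $2^{n/2}$ propagates unchanged to all four lifts.

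For part (iii) with $W_{f_i}(u^*)^2 \in \{0, 2^{n-2}, 2^n\}$, the equation $\sum_i W_{f_i}(u^*)^2 = 2^n = 4 \cdot 2^{n-2}$ admits only two solutions: either one term equals $2^n$ with the other three zero (placing $u^*$ in a unique $S^{[1]}_{f_i}$, producing (a)), or all four terms equal $2^{n-2}$ (placing $u^*$ in every $S^{[2]}_{f_i}$, producing the common support in (b)). On this common support the analysis from part (i) transports verbatim (with amplitude $2^{(n-2)/2}$) to yield $f^*_{[2],1} \oplus f^*_{[2],2} \oplus f^*_{[2],3} \oplus f^*_{[2],4} = 1$. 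Each converse direction follows by reversing the forward chain: the stated support and dual conditions guarantee that every Hadamard combination equals $\pm 2^{n/2}$, recovering bent-ness of $\mathfrak{f}$.

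The principal technical hurdle is the bookkeeping in the first step: the restrictions $f_i : V \to \mathbb{F}_2$ must be identified with functions on $\mathbb{F}_2^{n-2}$ in a way that makes their Walsh transforms line up inside $W_\mathfrak{f}(u)$, and that is precisely the content of Lemma \ref{lema1} together with the lexicographic conventions introduced in Section \ref{sec:pre}. Once that identification is installed, all three parts fall out uniformly from the local Parseval equality, distinguished only by the admissible magnitude patterns for $W_{f_i}(u^*)^2$.
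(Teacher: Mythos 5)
Your proof is correct and follows essentially the same route as the paper's: both expand $W_\mathfrak{f}(u)$ over the cosets of $V$, use Lemma \ref{lema1} to identify the inner sums with $W_{f_i}(\vartheta_u)$, and recognize the result as a row of $H_4$ applied to $(W_{f_1}(\vartheta_u),\ldots,W_{f_4}(\vartheta_u))$, after which each part is a sign/magnitude analysis. The only difference is that you make the magnitude bookkeeping explicit via the local Parseval identity $\sum_{i=1}^4 W_{f_i}(\vartheta)^2=2^n$, which the paper leaves implicit when it asserts the admissible solution patterns in parts (ii) and (iii); this is a clarifying refinement of the same argument rather than a different one.
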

\proof $i)$ Let the restrictions $f_i$ of $\mathfrak{f}$ be denoted by $\mathfrak{f}_q$ ($q\in Q=\{\textbf{0}_n,a,b,a\oplus b\}$), i.e., $(f_1,f_2,f_3,f_4)=(\mathfrak{f}_\textbf{0},\mathfrak{f}_a,\mathfrak{f}_b,\mathfrak{f}_{a\oplus b})$. We assume that  $V=\langle \alpha,\beta\rangle^{\perp}=\{v_0,\ldots,v_{2^{n-2}-1}\}$ is lexicographically ordered. Using that  $\mathbb{F}^n_2=\bigcup_{q\in Q}(q\oplus V)$ we compute the WHT of $\mathfrak{f}$, at arbitrary $u\in  \mathbb{F}^n_2$, as
%
\begin{eqnarray*}\label{equation}
W_\mathfrak{f}(u)=\sum_{q\in Q}(-1)^{u\cdot q}\sum_{v_i\in V}(-1)^{\mathfrak{f}(q\oplus v_i)\oplus u\cdot v_i}.
\end{eqnarray*}
By Lemma \ref{lema1}, we have that $\{u\cdot v:v\in V\}$ corresponds to the truth table of some linear function $\ell:\mathbb{F}^{n-2}_2\rightarrow \mathbb{F}_2$, say $\ell(x_1,\ldots,x_{n-2})=\vartheta_u\cdot x_j$ for some $\vartheta_u\in \mathbb{F}^{n-2}_2$ and  $x_j\in \mathbb{F}^{n-2}_2$, using the correspondence $u\cdot v_i\leftrightsquigarrow \vartheta_u\cdot x_i=\ell(x_i)$.

In addition, since $Q=\{\textbf{0}_n,a,b,a\oplus b\}$ is a linear subspace, then the sequence $\{(-1)^{u\cdot q}:q\in Q\}$ corresponds to some row of the Sylvester-Hadamard matrix of size $4\times 4$, i.e., for some integer $r_u\in[0,3]$ it holds that $(1,(-1)^{u\cdot a},(-1)^{u\cdot b},(-1)^{u\cdot (a\oplus b)})=H^{(r_u)}_4.$
 Consequently, since by (\ref{eq:rest}) we have that $\mathfrak{f}(q\oplus v_i)\leftrightsquigarrow \mathfrak{f}_q(x_i)$ ($i\in[0,2^{n-2}-1]$), then we can write
\begin{eqnarray*}\label{equation2}
W_\mathfrak{f}(u)&=&\sum_{q\in Q}(-1)^{u\cdot q}\sum_{x_i\in \mathbb{F}^{n-2}_2}(-1)^{\mathfrak{f}_q(x_i)\oplus \vartheta_u\cdot x_i}=\sum_{q\in Q}(-1)^{u\cdot q}W_{\mathfrak{f}_q}(\vartheta_u)\nonumber \\
&=&H^{(r_u)}_4\cdot(W_{f_1}(\vartheta_u),W_{f_2}(\vartheta_u),W_{f_3}(\vartheta_u),W_{f_4}(\vartheta_u)).
\end{eqnarray*}
Applying Lemma \ref{lema1} once again, when $u$ runs through the whole space $\mathbb{F}^n_2$ then $r_u$ runs through the whole set $\{0,1,2,3\}$ but also $\mathcal{L}_{n-2}\subset \{(u\cdot v_0,\ldots,u\cdot v_{2^{n-2}-1}):u\in \mathbb{F}^n_2\}$ (the latter set being a multi-set). In other words, $\mathfrak{f}$ is bent if and only if the system
\begin{eqnarray}\label{eq:system}
H_4\cdot (W_{f_1}(\vartheta),W_{f_2}(\vartheta),W_{f_3}(\vartheta),W_{f_4}(\vartheta))^T=(\pm 2^{\frac{n}{2}},\pm 2^{\frac{n}{2}},\pm 2^{\frac{n}{2}},\pm 2^{\frac{n}{2}})^T
\end{eqnarray}
holds for all vectors $\vartheta\in \mathbb{F}^{n-2}_2$. Now, if $f_i$ ($i\in[1,4]$) are bent on $\mathbb{F}^{n-2}_2$, that is $W_{f_i}(\vartheta)=2^{\frac{n-2}{2}}(-1)^{f^*_i(\vartheta)}$, then the system (\ref{eq:system}) holds if and only if $((-1)^{f^*_1(\vartheta)},\ldots,(-1)^{f^*_4(\vartheta)})$ is a sequence of a bent function in two variables, which is equivalent to $f^*_1\oplus f^*_2\oplus f^*_3\oplus f^*_4=1$.

$ii)$ Let $f_i$ be semi-bent on $\mathbb{F}^{n-2}_2$, for $i=1, \ldots, 4$.
The only possible solutions of the system (\ref{eq:system}) is such that if for some $\vartheta \in \F_2^{n-2}$ we have  $W_{f_i}(\vartheta)=\pm 2^{\frac{n}{2}}$ then necessarily  $W_{f_j}(\vartheta)=0$ for $j \neq i$. Since the Walsh supports $S_{f_i}$ partition the space $\mathbb{F}^{n-2}_2$, then $f_i$ have to be pairwise disjoint spectra functions.

$iii)$ In this case $W_{f_i} \in \{0, \pm 2^{(n-2)/2}, \pm 2^{n/2}\} $ \cite{Decom}. Clearly, if $\mathfrak{f}$ is bent on $\FB$ then  both a) and b) are necessary.
Then, similarly as in $i)$, we have that $f^*_{[2],1}\oplus f^*_{[2],2}\oplus f^*_{[2],3}\oplus f^*_{[2],4}=1$, where $f^*_{[2],i}:S^{[2]}_{f_i}\rightarrow \mathbb{F}_2$ and $S^{[2]}_{f_i}=\{\vartheta\in \mathbb{F}^{n-2}_2:|W_{f_i}(\vartheta)|=2^{\frac{n-2}{2}}\}$ $(i\in[1,4])$.\qed
Theorem \ref{th:dec}-$(i)$ means that the concatenation of bent functions $f_1,\ldots,f_4:\mathbb{F}^{n-2}_2\rightarrow \mathbb{F}_2$ gives a bent function in $n$ variables if and only if  $f^*_1\oplus f^*_2\oplus f^*_3\oplus f^*_4=1$ holds. However, note that this fact is known as \cite[Theorem 7]{Pren} (or \cite[Theorem 1]{Tokareva}). Some further comments on  Theorem \ref{th:dec}-$(ii)$ will be given in Section \ref{sec:speccon}.
\begin{rem}\label{remm1}
The construction of bent functions $f_1,\ldots,f_4$ with the property that $f^*_1\oplus f^*_2\oplus f^*_3\oplus f^*_4=1$ holds (where $f_4=f_1\oplus f_2\oplus f_3$) has been addressed in \cite{Nastja}. 
Note that the Theorem \ref{th:dec}-$(i)$ does not require that $f_4=f_1\oplus f_2\oplus f_3$, and thus the condition $f^*_1\oplus f^*_2\oplus f^*_3\oplus f^*_4=1$ is much easier to satisfy.
\end{rem}
In the rest of the article, we focus on construction of 5-valued spectra functions which satisfy the properties of Theorem \ref{th:dec}-$(iii)$.

\section{Spectral construction of 5-valued spectra functions}\label{sec:speccon}

The so-called spectral technique of constructing plateaued functions (which include semi-bent functions), which precisely specifies the Walsh support and the signs of the nonzero Walsh coefficients, has been introduced in \cite{SHCF} and later slightly extended in \cite{SHEA}. In this section, we recall this spectral method (used to construct plateaued functions)  and extend it to the case of 5-valued spectra functions.

For an arbitrary $s$-plateaued function $f$ defined on $\mathbb{F}^n_2$ let $S_f\subset \mathbb{F}^n_2$ ($\#S_f=2^{n-s}$) be its Walsh support ordered as $S_f=\{\omega_{0},\ldots,\omega_{2^{n-s}-1}\}$. The \emph{sequence profile} of $S_f$,
 which is a multi-set of $2^n$ sequences of length $2^{n-s}$ induced by $S_f$, is defined as 
\begin{eqnarray}\label{RSf}
\Phi_{f}=\{\phi_u:\mathbb{F}^{n-s}_2\rightarrow \mathbb{F}_2\;:\; \chi_{\phi_u}=((-1)^{u\cdot \omega_{0}},(-1)^{u\cdot \omega_{1}},\ldots,(-1)^{u\cdot \omega_{2^{n-s}-1}}),\;\omega_i\in S_f,\; u\in \mathbb{F}^{n}_2\}.
\end{eqnarray}
As noted in \cite{SHCF}, $\Phi_{f}$ depends on the ordering of $S_f$ and it is spanned by the functions $\phi_{b_1},\ldots,\phi_{b_n}$, i.e., $\Phi_{f}=\langle \phi_{b_1},\ldots,\phi_{b_n}\rangle$, where $b_1,\ldots,b_n$ is the canonical basis of $\mathbb{F}^n_2$ ($b_i$ contains the non-zero coordinate at the $i$-th position).

The following result  summarizes the spectral method described in \cite[Section 3]{SHCF}.
\begin{theo}\cite{SHCF}\label{th:plateH}
Let  $S_f=v\oplus E=\{\omega_0,\ldots,\omega_{2^{n-s}-1}\} \subset \FB^n$, with $\#S_f=2^{n-s}$, for some $v \in \mathbb{F}^n_2$ and lexicographically ordered subset $E=\{e_0,e_1, \ldots, e_{2^{n-s}-1}\}\subset \mathbb{F}^n_2$ with $e_0=\textbf{0}_n$ $($where $\omega_i=v\oplus e_i)$. For a function $g:\FB^{n-s}\rightarrow \mathbb{F}_2$ with $wt(g)=2^{n-s-1}\pm 2^{\frac{n-s}{2}-1}$, let the Walsh spectrum of $f$ be defined (by identifying $x_i \in \FB^{n-s}$ and $e_i \in E$) as
\begin{eqnarray}\label{eq:walshX}
W_f(u)= \left \{  \begin{array}{ll}
  2^{\frac{n+s}{2}}(-1)^{g(x_i)} & \textnormal{ for } u= v \oplus e_i \in S_f,\\
 0 & u \not \in S_f. \\
\end{array}  \right.
\end{eqnarray}
Then:
\begin{enumerate}[i)]
\item $f$ is an $s$-plateaued function if and only if $g$ is at bent distance to $\Phi_{f}$ defined by (\ref{RSf}).
\item If $E\subset \mathbb{F}^n_2$ is a linear subspace, then $f$ is $s$-plateaued  if and only if $g \in \mathcal{B}_{n-s}$ is bent.
\end{enumerate}
\end{theo}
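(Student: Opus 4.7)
The plan is to apply the inverse Walsh transform to the prescribed spectrum (\ref{eq:walshX}) and determine precisely when the resulting real-valued function $\tilde f:\FB^n\to\mathbb{R}$ is the sign sequence of a Boolean function. Once that holds, (\ref{eq:walshX}) takes values in $\{0,\pm 2^{(n+s)/2}\}$ by construction and the unique $f$ satisfying $(-1)^{f(x)}=\tilde f(x)$ is automatically $s$-plateaued, giving part $i)$. Part $ii)$ will then be a specialization of $i)$ using Lemma \ref{lema1} to identify $\Phi_f$ with $\mathcal{A}_{n-s}$ when $E$ is a linear subspace.

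For $i)$, fix $x\in\FB^n$ and substitute (\ref{eq:walshX}) into the inverse WHT:
\begin{equation*}
\tilde f(x)=2^{-n}\sum_{u\in\FB^n}W_f(u)(-1)^{u\cdot x}=2^{(s-n)/2}\sum_{i=0}^{2^{n-s}-1}(-1)^{g(x_i)}(-1)^{\omega_i\cdot x}.
\end{equation*}
Under the identification $\omega_i\leftrightsquigarrow x_i$ of $S_f$ with $\FB^{n-s}$, the defining relation $(-1)^{\phi_x(x_i)}=(-1)^{x\cdot\omega_i}$ from (\ref{RSf}) lets me rewrite this as $\tilde f(x)=2^{(s-n)/2}\,\chi_g\cdot\chi_{\phi_x}$. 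Using $\chi_g\cdot\chi_{\phi_x}=2^{n-s}-2d_H(g,\phi_x)$, the condition $\tilde f(x)\in\{-1,+1\}$ becomes $d_H(g,\phi_x)=2^{n-s-1}\pm 2^{(n-s)/2-1}$, i.e.\ $g$ is at bent distance from $\phi_x$. Letting $x$ range over $\FB^n$, this is precisely the condition that $g$ be at bent distance from every element of $\Phi_f$, proving $i)$. The hypothesis $wt(g)=2^{n-s-1}\pm 2^{(n-s)/2-1}$ is the special case $x=\textbf{0}_n$ (for which $\phi_{\textbf{0}_n}\equiv 0$ and $d_H(g,0)=wt(g)$), stated up front only to make the setup explicit.

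For $ii)$, assume $E\subset\FB^n$ is a linear subspace. Applying Lemma \ref{lema1} to the affine support $S_f=v\oplus E$ shows that, for every $u\in\FB^n$, the sequence $((-1)^{u\cdot\omega_0},\ldots,(-1)^{u\cdot\omega_{2^{n-s}-1}})$ equals $(-1)^{\varepsilon_u}H^{(r_u)}_{2^{n-s}}$, so every $\phi_u$ is affine on $\FB^{n-s}$; moreover, the same lemma guarantees that the entire set $\mathcal{L}_{n-s}$, and therefore also $\mathcal{A}_{n-s}$, occurs among the $\phi_u$. Consequently $\Phi_f=\mathcal{A}_{n-s}$, and the bent-distance condition from $i)$ collapses into $g$ being at bent distance from every affine function on $\FB^{n-s}$, which is the standard characterization of a bent function. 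The only delicate point in the whole argument is the bookkeeping that turns the $n$-dimensional sum over $S_f$ into an $(n-s)$-dimensional correlation of sign sequences; once the ordering convention $\omega_i\leftrightsquigarrow x_i$ built into (\ref{eq:walshX}) is respected, the remainder is a direct translation of the usual distance characterization of bent functions.
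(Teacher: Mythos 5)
Your argument is correct and coincides with the approach the paper itself sketches around equation (\ref{eq:pspec}): invert the prescribed spectrum, recognize the sum over $S_f$ as the correlation $\chi_g\cdot\chi_{\phi_u}$ (up to the factor $2^{(s-n)/2}$), and read off that $f$ is a well-defined Boolean (hence $s$-plateaued) function exactly when $g$ is at bent distance to every $\phi_u\in\Phi_f$, with part $ii)$ following from Lemma \ref{lema1}. The only cosmetic imprecision is the assertion $\Phi_f=\mathcal{A}_{n-s}$ --- the lemma only guarantees that every linear function occurs in $\Phi_f$ up to complementation --- but since bent distance is invariant under complementing a function, this does not affect the conclusion.
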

\begin{rem}
The case when $S_f$ is represented as $S_f=v\oplus EM$, where $M$ is an invertible binary matrix, is given by \cite[Theorem 3.3]{SHEA}.
\end{rem}
In the context of Theorem \ref{th:dec}-$(ii)$ which regards the  design of $f_1,\ldots,f_4$ suitable for a semi-bent 4-decomposition, the Walsh supports $S_{f_i}$ are necessarily pairwise disjoint.
Nevertheless, Theorem \ref{th:plateH}-$(i)$ requires that one must ensure that
  $f^*_i$ is at bent distance to $\Phi_{f_i}$, for any $i\in [1,4]$. The latter condition is however easily satisfied if the Walsh supports $S_{f_i}$ are the cosets of a linear subspace, since in that case the set $\Phi_{f_i}$ contains only linear/affine functions (due to Lemma \ref{lema1}). The selection of dual  bent functions $f^*_i:\mathbb{F}^{n-4}_2\rightarrow \mathbb{F}_2$  of $f_i$ in relation (\ref{eq:walshX}) is then arbitrary.

In order to extend this approach, thus  to design  5-valued spectra function described by Theorem \ref{th:dec}-$(iii)$ in the spectral domain, we further clarify the essence of this approach.
Namely, if we want to design  an $s$-plateaued function $f \in \mathcal{B}_n$ with Walsh support $S_f=v\oplus E$ ($v\in \mathbb{F}^n_2$, $E\subset \mathbb{F}^n_2$, $\#S_f=2^{n-s}$), then at arbitrary $u\in \mathbb{F}^n_2$ (using the inverse WHT)  we have:
\begin{eqnarray}\label{eq:pspec}
\sum_{\omega_i\in S_f}(-1)^{f^*(\omega_i)\oplus u\cdot \omega_i}=\sum_{x_i\in \mathbb{F}^{n-s}_2}(-1)^{f^*(x_i)\oplus \phi_u(x_i)}=\chi_{f^*}\cdot \chi_{\phi_u}=2^{\frac{n-s}{2}}(-1)^{f(u)},
\end{eqnarray}
where $\phi_u(x_i)\leftrightsquigarrow \phi_u(\omega_i)=u\cdot\omega_i$, $i\in[0,2^{n-s}-1]$. Thus, for a given $S_f$ the main problem is to specify the dual function $f^*$ which specifies the signs of the nonzero Walsh coefficients so that $\chi_{f^*}\cdot \chi_{\phi_u}=\pm 2^{\frac{n-s}{2}}$, for all $u\in \mathbb{F}^n_2$. In other words, $f^*$ and $\phi_u$ must be at bent distance for any $u\in \mathbb{F}^n_2$. If this condition is not satisfied, then $f$ is not a Boolean function.


The spectral design  of 5-valued spectra functions essentially relies on some strict properties of the dual functions that need to be fulfilled.
\begin{prop}\label{prop:spec}
Let  $n$ be even and specify  $S^{[1]}_f=\{u\in \mathbb{F}^n_2:|W_f(u)|=2^{\frac{n+2}{2}}\}$, $S^{[2]}_f=\{u\in \mathbb{F}^n_2:|W_f(u)|=2^{\frac{n}{2}}\}$ with $S^{[1]}_f\cap S^{[2]}_f=\emptyset$ $(\#S^{[1]}_f+ \#S^{[2]}_f<2^n)$. Assume  that the spectrum $W_f$ is constructed as
\begin{eqnarray}\label{eq:wX}
W_f(u)= \left\{\begin{array}{cc}
                 0, & u\not\in S_f \\
                 (-1)^{f^*_{[1]}(u)}\cdot 2^{\frac{n+2}{2}}, & u\in S^{[1]}_f \\
                 (-1)^{f^*_{[2]}(u)}\cdot 2^{\frac{n}{2}}, & u\in S^{[2]}_f
               \end{array}
\right.,
\end{eqnarray}
where the duals $f^*_{[i]}:S^{[i]}_{f}\rightarrow \mathbb{F}_2$ $(i=1,2)$. Then, $W_f=(W_f(u_0),\ldots, W_f(u_{2^n-1}))$ is a spectrum of a Boolean function $f:\mathbb{F}^n_2\rightarrow \mathbb{F}_2$ if and only if the equality
\begin{eqnarray}\label{speceq}
2X_1(u)+X_2(u)=(-1)^{\varepsilon_u} 2^{\frac{n}{2}},
\end{eqnarray}
holds for all $u\in \mathbb{F}^n_2$ $(\varepsilon_u\in \{0,1\})$, where $X_i(u)=\sum_{\omega\in S^{[i]}_f}(-1)^{f^*_{[i]}(\omega)\oplus u\cdot \omega}$, $i=1,2$.
\end{prop}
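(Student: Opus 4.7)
The natural approach is to use the fact that a sequence $(W_f(u_0),\ldots,W_f(u_{2^n-1}))$ is the Walsh spectrum of some Boolean function $f:\mathbb{F}^n_2\rightarrow \mathbb{F}_2$ if and only if the inverse Walsh--Hadamard transform, evaluated at every point $u \in \mathbb{F}^n_2$, returns $\pm 1$. Indeed, by the inverse WHT formula in (\ref{WHT}), one has $(-1)^{f(u)}=2^{-n}\sum_{\omega \in \mathbb{F}^n_2} W_f(\omega)(-1)^{\omega\cdot u}$, so a candidate spectrum corresponds to a Boolean function exactly when this inner sum lies in $\{\pm 2^n\}$ for every $u$.

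The plan is therefore to plug the hypothesised spectrum (\ref{eq:wX}) directly into this inverse transform and simplify. Since $W_f$ vanishes off $S^{[1]}_f\cup S^{[2]}_f$ and the two supports are disjoint, the sum splits as
\begin{eqnarray*}
\sum_{\omega\in \mathbb{F}^n_2} W_f(\omega)(-1)^{\omega\cdot u}
&=& 2^{\frac{n+2}{2}}\!\!\sum_{\omega\in S^{[1]}_f}(-1)^{f^*_{[1]}(\omega)\oplus u\cdot \omega}
 \;+\; 2^{\frac{n}{2}}\!\!\sum_{\omega\in S^{[2]}_f}(-1)^{f^*_{[2]}(\omega)\oplus u\cdot \omega}\\
&=& 2^{\frac{n}{2}}\bigl(2X_1(u)+X_2(u)\bigr).
\end{eqnarray*}

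For the forward implication, I assume $W_f$ is the spectrum of a Boolean function $f$. Then the above equals $2^n(-1)^{f(u)}$, and dividing by $2^{n/2}$ gives $2X_1(u)+X_2(u)=(-1)^{f(u)}2^{n/2}$, which is precisely (\ref{speceq}) with $\varepsilon_u=f(u)$. For the converse, if (\ref{speceq}) holds for every $u$, then the computation above yields $2^{-n}\sum_{\omega}W_f(\omega)(-1)^{\omega\cdot u}=(-1)^{\varepsilon_u}\in \{\pm 1\}$, so the inverse WHT of $W_f$ is a $\pm 1$--valued function; defining $f(u):=\varepsilon_u$, Parseval-type inversion guarantees that $W_f$ as defined in (\ref{eq:wX}) is indeed its Walsh spectrum.

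No serious obstacle is anticipated: the entire argument is a direct unpacking of the inverse Walsh transform against the piecewise definition of $W_f$. The only subtlety worth highlighting is that the hypothesis $\#S^{[1]}_f+\#S^{[2]}_f<2^n$ is needed so that the zero values on the complement of $S^{[1]}_f\cup S^{[2]}_f$ are consistent with a genuine $5$-valued (as opposed to $3$-valued) spectrum, but this condition does not enter the algebraic verification itself; the identity (\ref{speceq}) encodes the entire Boolean constraint.
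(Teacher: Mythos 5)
Your proposal is correct and follows essentially the same route as the paper: both apply the inverse Walsh--Hadamard transform, split the sum over the disjoint supports $S^{[1]}_f$ and $S^{[2]}_f$, factor out $2^{n/2}$, and read off the equivalence $2^{n/2}(-1)^{f(u)}=2X_1(u)+X_2(u)$. Your treatment is slightly more explicit about the converse direction (defining $f(u):=\varepsilon_u$), but the argument is the same.
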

\proof For arbitrary $u\in \mathbb{F}^n_2$, by the inverse WHT formula (\ref{WHT}) we have that
$$2^n(-1)^{f(u)}=\sum_{\omega\in \mathbb{F}^n_2}W_f(u)(-1)^{u\cdot \omega}=2^{\frac{n+2}{2}}\sum_{\omega\in S^{[1]}_f}(-1)^{f^*_{[1]}(\omega)\oplus u\cdot \omega}+2^{\frac{n}{2}}\sum_{\omega\in S^{[2]}_f}(-1)^{f^*_{[2]}(\omega)\oplus u\cdot \omega},$$
which is equivalent to $2^{\frac{n}{2}}(-1)^{f(u)}=2X_1(u)+X_2(u)$, i.e., the statement holds.\qed

This result, through equation (\ref{speceq}), provides both necessary and sufficient conditions for which $W_f$ constructed by means of (\ref{eq:wX}) is 5-valued spectrum of a Boolean function.
In order to solve equation (\ref{speceq}) for unknown pairs $(S^{[i]}_f,f^*_{[i]})$ ($i=1,2$), we  generalize the notion of non-overlap disjoint spectra property introduced in \cite{DisjSpectra} by adding an additional constraint relevant to our context.
\begin{defi}\label{def:gsp}
For two disjoint sets $S^{[1]}_f, S^{[2]}_f\subset \mathbb{F}^{n}_2$, with  $\#S^{[1]}_f + \# S^{[2]}_f= 2^{\lambda_1}+2^{\lambda_2}< 2^n,$ we say that functions $f^*_{[1]}:S^{[1]}_f\rightarrow \mathbb{F}_2$ and $f^*_{[2]}:S^{[2]}_f\rightarrow \mathbb{F}_2$  are {\em totally  disjoint spectra functions} if it holds that $$X_1(u)X_2(u)=0\;\;\;\text{and}\;\;\;|X_1(u)|+|X_2(u)|>0$$ for all $u\in \mathbb{F}^n_2$, where $X_i(u)=\sum_{\omega\in S^{[i]}_f}(-1)^{f^*_{[i]}(\omega)\oplus u\cdot \omega}$, $i=1,2$.
\end{defi}
\begin{rem}
Note that the second condition implies the nonexistence of a vector $u\in \mathbb{F}^n_2$ for which $X_1(u)=X_2(u)=0$. This prevents from getting a contradiction in  (\ref{speceq}) since the right side $\pm 2^{\frac{n}{2}}$ in (\ref{speceq}) must not be equal to zero. Without this condition the notion of totally disjoint spectra coincides with non-overlap disjoint spectra functions in \cite{DisjSpectra}.
\end{rem}
The following result connects the amplitudes and dimensions of the duals $f_1^*$ and $f_2^*$ so that 5-valued spectra functions can be derived from totally disjoint spectra functions.
\begin{prop}\label{prop:con1}
Let $S^{[i]}_f=\{\omega^{(i)}_0,\ldots,\omega^{(i)}_{2^{\lambda_i}-1}\}=v\oplus E_i\subset\mathbb{F}^n_2$ $(n$ even$)$ be disjoint affine subspaces, where $E_i=\{e^{(i)}_0,\ldots,e^{(i)}_{2^{\lambda_i}-1}\}$ $(i=1,2$, $e^{(i)}_0=\textbf{0}_n)$ and $2^{\lambda_1}+2^{\lambda_2}<2^n$. Suppose that $f^*_{[i]}:\mathbb{F}^{\lambda_i}_2\rightarrow\mathbb{F}_2$ are totally disjoint spectra $s_i$-plateaued functions. Then, $W_f=(W_f(u_0),\ldots,W_f(u_{2^{n}-1}))$ constructed by (\ref{eq:wX}) is a spectrum of a 5-valued spectra function $f:\mathbb{F}^n_2\rightarrow \mathbb{F}_2$ if and only if $\lambda_1+s_1+2=\lambda_2+s_2=n$.
\end{prop}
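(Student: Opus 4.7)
The plan is to reduce the problem directly to Proposition \ref{prop:spec}: the candidate spectrum $W_f$ defined by (\ref{eq:wX}) is the spectrum of a Boolean function if and only if the identity $2X_1(u) + X_2(u) = (-1)^{\varepsilon_u} 2^{n/2}$ holds for every $u \in \mathbb{F}_2^n$. So the task becomes to describe precisely which values the sums $X_1(u)$ and $X_2(u)$ can take under the given hypotheses and then match them against $\pm 2^{n/2}$.

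The key step is to exploit the fact that each $S^{[i]}_f = v \oplus E_i$ with $E_i$ linear of dimension $\lambda_i$. I would apply Lemma \ref{lema1} to the linear subspace $E_i$: for every $u \in \mathbb{F}_2^n$ there is a vector $\vartheta^{(i)}_u \in \mathbb{F}_2^{\lambda_i}$ such that the sequence $((-1)^{u \cdot e^{(i)}_0}, \ldots, (-1)^{u \cdot e^{(i)}_{2^{\lambda_i}-1}})$ is the truth table of the linear function $x \mapsto \vartheta^{(i)}_u \cdot x$ on $\mathbb{F}_2^{\lambda_i}$ (with $\mathbb{F}_2^{\lambda_i}$ lexicographically ordered). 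Splitting $(-1)^{u \cdot \omega} = (-1)^{u \cdot v}(-1)^{u \cdot e}$ for $\omega = v \oplus e \in S^{[i]}_f$, and using the identification (\ref{eq:dp}) that turns $f^*_{[i]}$ into a function on $\mathbb{F}_2^{\lambda_i}$, I would obtain
$$X_i(u) = (-1)^{u \cdot v} \, W_{f^*_{[i]}}(\vartheta^{(i)}_u).$$
Since $f^*_{[i]}$ is assumed to be $s_i$-plateaued on $\mathbb{F}_2^{\lambda_i}$, this forces $X_i(u) \in \{0, \pm 2^{(\lambda_i + s_i)/2}\}$.

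Next I would invoke the totally disjoint spectra hypothesis (Definition \ref{def:gsp}): for each $u$, exactly one of $X_1(u), X_2(u)$ is nonzero. Splitting into the two cases, equation (\ref{speceq}) reduces to either $2X_1(u) = \pm 2^{(\lambda_1+s_1)/2 + 1}$ or $X_2(u) = \pm 2^{(\lambda_2+s_2)/2}$, and these match the required $\pm 2^{n/2}$ exactly when $\lambda_1 + s_1 + 2 = n$ or $\lambda_2 + s_2 = n$ respectively. Both cases are actually realised for some $u$, because the image of $u \mapsto \vartheta^{(i)}_u$ covers $\mathbb{F}_2^{\lambda_i}$ (by the second statement of Lemma \ref{lema1}) and a plateaued function has nonempty Walsh support; hence the two arithmetic identities are simultaneously necessary and sufficient, and Proposition \ref{prop:spec} delivers the desired equivalence.

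The only subtle point I anticipate is the bookkeeping when Lemma \ref{lema1} is invoked on the linear subspace $E_i$ rather than on the affine set $S^{[i]}_f$ itself, so that the extra sign $(-1)^{u \cdot v}$ coming from the affine shift is correctly absorbed into the expression for $X_i(u)$ and the lexicographic ordering of $E_i$ is consistent with the ordering used in (\ref{eq:dp}) when viewing $f^*_{[i]}$ as a function on $\mathbb{F}_2^{\lambda_i}$. Once this accounting is verified, the rest is amplitude matching and a direct application of Proposition \ref{prop:spec}.
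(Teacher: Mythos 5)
Your proposal is correct and follows essentially the same route as the paper's proof: reduce to Proposition \ref{prop:spec}, use Lemma \ref{lema1} to turn each $X_i(u)$ into a Walsh coefficient of the plateaued dual $f^*_{[i]}$ (up to the sign $(-1)^{u\cdot v}$ from the affine shift), and then use the totally disjoint spectra property to match the two possible amplitudes against $\pm 2^{n/2}$. Your explicit argument that both amplitude cases are actually realised for some $u$ (surjectivity of $u\mapsto\vartheta^{(i)}_u$ together with nonemptiness of the Walsh supports) is in fact slightly more careful on the ``only if'' direction than the paper, which asserts it without comment.
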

\proof We only need to prove that under the given conditions the equation (\ref{speceq}) is satisfied for all $u\in \mathbb{F}^n_2$. By Lemma \ref{lema1}, we have that $(u\cdot \omega^{(i)}_0,\ldots,u\cdot \omega^{(i)}_{2^{\lambda_i}-1})$ are truth tables of linear/affine functions defined on $\mathbb{F}^{\lambda_i}_2$. Consequently, the equality in (\ref{speceq}) is equivalent to
\begin{eqnarray*}\label{speceq2}
2W_{f^*_{[1]}}(\vartheta_u)+W_{f^*_{[2]}}(\theta_u)=\pm 2^{\frac{n}{2}},\;\;\;\vartheta_u\in \mathbb{F}^{\lambda_1}_2,\;\;\theta_u\in \mathbb{F}^{\lambda_2}_2,
\end{eqnarray*}
where $\{u\cdot \omega^{(1)}_j:\omega^{(1)}_j\in S^{[1]}_f\}=\{\vartheta_u\cdot x_j:x_j\in \mathbb{F}^{\lambda_1}_2\}$ and $\{u\cdot \omega^{(2)}_j:\omega^{(2)}_j\in S^{[2]}_f\}=\{\theta_u\cdot x_j :x_j\in \mathbb{F}^{\lambda_2}_2\}$. Using the fact that $f^*_{[i]}$ are totally disjoint spectra functions, we have the following:
\begin{eqnarray}\label{speceq2}
\pm 2^{\frac{n}{2}}=2W_{f^*_{[1]}}(\vartheta_u)+W_{f^*_{[2]}}(\theta_u)=\left\{\begin{array}{cc}
                                                                                   \pm 2\cdot 2^{\frac{\lambda_1+s_1}{2}}, & \vartheta_u\in S_{f^*_{[1]}}\subset \mathbb{F}^{\lambda_1}_2 \\
                                                                                   \pm 2^{\frac{\lambda_2+s_2}{2}}, & \theta_u\in S_{f^*_{[2]}}\subset \mathbb{F}^{\lambda_2}_2
                                                                                 \end{array}
\right..
\end{eqnarray}
Note that the totally disjoint spectra property for plateaued functions $f^*_{[i]}$ means that either $\vartheta_u$ belongs to the Walsh support $S_{f^*_{[1]}}$ and $\theta_u$ does not belong to $S_{f^*_{[2]}}$, or vice versa (due to the second property of Definition \ref{def:gsp}). Now, the equality (\ref{speceq2}) holds if and only if $\lambda_1+s_1+2=\lambda_2+s_2=n$ holds, and thus a function $f:\mathbb{F}^n_2\rightarrow \mathbb{F}_2$ obtained from the constructed spectrum (applying inverse WHT) is a 5-valued spectra function.\qed

The following example illustrates the possibility of specifying totally disjoint spectra functions.
\begin{ex}\label{ex:spec1}
Let $E_1,E_2\subset \mathbb{F}^6_2$ be linear subspaces given as
$$E_1=\{e^{(1)}_0,\ldots,e^{(1)}_{7}\}=\{(0,0,0)\}\times \mathbb{F}^3_2,\;\;\;E_2=\{e^{(2)}_0,\ldots,e^{(2)}_{31}\}=\{0\}\times \mathbb{F}^5_2.$$
Also, for $v_1=\bf{0}_6$ and $v_2=(1, 0, 0, 1, 0, 0)$, let $S^{[1]}_f=E_1$ and $S^{[2]}_f=v_2\oplus E_1$ $(S^{[1]}_f\cap S^{[2]}_f=\emptyset)$.\\
 Let us define two plateaued functions
$$f^*_{[1]}(x_1,x_2,x_3)=x_1x_2,\;\;\;\; f^*_{[2]}(x_1,x_2,x_3,x_4,x_5)=x_1 x_2 \oplus x_3 x_4\oplus x_5$$
on lexicographically ordered sets $E_i$ using  $f^*_{[1]}(v_1\oplus e^{(1)}_i)\leftrightsquigarrow f^*_{[1]}(y_i)$ $(y_i\in \mathbb{F}^3_2$, $i\in[0,7])$ and  $f^*_{[2]}(v_2\oplus e^{(2)}_j)\leftrightsquigarrow f^*_{[2]}(w_j)$ $(w_j\in \mathbb{F}^{5}_2$, $j\in[0,31])$. It  can be easily verified that $f^*_{[1]}$ and $f^*_{[2]}$ are totally disjoint spectra functions. More precisely, both being 1-plateaued functions on corresponding domains, we have that
$$(X_1(u),X_2(u))\in \{(\pm 4,0), (0,\pm 8)\}$$
holds for all $u\in \mathbb{F}^6_2$, where $X_i(u)=\sum_{e\in E_i}(-1)^{f^*_{[i]}(v_i\oplus e)\oplus u\cdot (v_i\oplus e)}$.
The spectrum of  $f:\mathbb{F}^6_2\rightarrow \mathbb{F}_2$, whose Walsh supports are $S^{[i]}_f$ and duals $f^*_{[i]}$, is given  as (using (\ref{eq:wX}))
\begin{eqnarray*}
W_f=\hskip -2mm &(&\hskip -3mm -16, -16, -16, -16, -16, -16, 16, 16, 0, 0, 0, 0, 0, 0, 0, 0, 0, 0,
0, 0, 0, 0, 0, 0, 0, 0, 0, 0, 0, 0,\\
&&\hskip -5mm 0, 0, -8, 8, 8, -8, -8, 8, -8, 8,
-8, 8, 8, -8, -8, 8, -8, 8, -8, 8, 8, -8, -8, 8, -8, 8, 8, -8,\\
&&\hskip -5mm -8, 8, 8, -8, 8, -8).
\end{eqnarray*}
By applying the inverse WHT to $W_f$ one recovers the ANF of $f:\mathbb{F}^6_2\rightarrow \mathbb{F}_2$ which is given as
$$f(x_1,\ldots,x_6)=1 \oplus x_1 x_6 \oplus x_2 x_3 x_6 \oplus x_4 (x_5 \oplus x_6).$$
\end{ex}

\subsection{Specifying totally disjoint spectra functions}

The design of totally disjoint spectra plateaued functions is apparently a harder task than constructing regular disjoint spectra functions (using Theorem \ref{th:plateH} for instance), due to the facts that $f^*_{[i]}$ are defined on different spaces $\mathbb{F}^{\lambda_i}_2$ and sums $X_i$ depend on vectors $u\in \mathbb{F}^n_2$. The previous example illustrates their existence but nevertheless a generic methods for their construction is needed. The following generic approach utilize bent functions as initial functions for this purpose.

\begin{construction}
Let $n$, $m$ and $k$ be even with $n=m+k$. Let $h\in \mathcal{B}_m$ and $g\in \mathcal{B}_k$ be two bent functions.
Let $H$ be any subspace of $\F^m_2$ of co-dimension 1, and let $\oo{H}=\F^m_2\backslash H$.
Let also $E_1=\F^k_2\times H$ and $E_2=\{\textbf{0}_k\}\times \oo{H}$.
We construct the spectra of $f\in \mathcal{B}_n$ as follows:
\begin{equation}\label{Wf}
  W_{f}(\alpha, \beta)=
  \begin{cases}
  (-1)^{g(\alpha)\oplus h(\beta)}\cdot 2^{n/2},& (\alpha,\beta)\in E_1\\
  (-1)^{h(\beta)}\cdot 2^{m/2+k},& (\alpha,\beta)\in E_2\\
  0,& otherwise.
  \end{cases}
\end{equation}
Then $W_f$ is the spectra of a Boolean function $f\in \mathcal{B}_n$.
Let now
\begin{eqnarray*}
  f_1(\alpha,\beta) &=& g(\alpha)\oplus h(\beta),~~(\alpha,\beta)\in E_1\\
  f_2(\alpha,\beta) &=& h(\beta),~~(\alpha,\beta)\in E_2.
\end{eqnarray*}
Then $f_1$ and $f_2$ are totally disjoint spectra functions.
\end{construction}

\begin{proof}
Let $x\in \F^k_2$ and $y\in \F^m_2$. For any $(\alpha,\beta)\in\F^n_2$, by the inverse WHT formula, we have
\begin{eqnarray*}
  2^{n}(-1)^{f(x,y)} &=&\sum_{(\alpha,\beta)\in \F^n_2} W_f(\alpha,\beta)(-1)^{(x,y)\cdot (\alpha,\beta)} \\
    &=& 2^{n/2}\sum_{(\alpha,\beta)\in E_1}(-1)^{g(\alpha)\oplus h(\beta)}(-1)^{(x,y)\cdot (\alpha,\beta)}
     +  2^{m/2+k}\sum_{(\alpha,\beta)\in E_2} (-1)^{h^*(\beta)} (-1)^{(x,y)\cdot (\alpha,\beta)}\\
    &=& 2^{n/2}\sum_{\alpha\in \F^k_2}(-1)^{g(\alpha)\oplus x\cdot \alpha}\sum_{\beta\in H}(-1)^{h(\beta)\oplus y\cdot\beta}
     +  2^{m/2+k}\sum_{\alpha=\textbf{0}_k}(-1)^{x\cdot \alpha}\sum_{\beta\in \oo{H}}(-1)^{h(\beta)\oplus y\cdot\beta}\\
    &=& 2^{n/2}W_g(x)\sum_{\beta\in H}(-1)^{h(\beta)\oplus y\cdot\beta}
     +  2^{m/2+k}\sum_{\beta\in \oo{H}}(-1)^{h(\beta)\oplus y\cdot\beta}\\
    &=& \pm 2^{m/2+k}\sum_{\beta\in H}(-1)^{h(\beta)\oplus y\cdot\beta}
     +  2^{m/2+k}\sum_{\beta\in \oo{H}}(-1)^{h(\beta)\oplus y\cdot\beta}\\
\end{eqnarray*}
Note that $H\cup \oo{H}=\F^m_2$ and $H\cap \oo{H}=\emptyset$. We always have
 \begin{equation}\label{H}
   \sum_{\beta\in H}(-1)^{h(\beta)\oplus y\cdot\beta}\cdot\sum_{\beta\in \oo{H}}(-1)^{h(\beta)\oplus y\cdot\beta}=0
 \end{equation}
 and
 \begin{equation}\label{cH}
     \sum_{\beta\in H}(-1)^{h(\beta)\oplus y\cdot\beta}+\sum_{\beta\in \oo{H}}(-1)^{h(\beta)\oplus y\cdot\beta}=\pm 2^{m/2},
 \end{equation}
 since the restrictions of $h$ to $H$ and $\overline{H}$ are disjoint spectra semi-bent functions in $m-1$ variables \cite[Theorem V.3]{Pascale}. Consequently, this implies
 \begin{equation*}\label{}
 2^{n}(-1)^{f(x,y)}=\pm 2^{m/2+k}\sum_{\beta\in H}(-1)^{h(\beta)\oplus y\cdot\beta}
     +  2^{m/2+k}\sum_{\beta\in \oo{H}}(-1)^{h(\beta)\oplus y\cdot\beta}=\pm 2^n
 \end{equation*}
 always holds. This proves $f$ is a Boolean function. By (\ref{H}) and (\ref{cH}), $f_1$ and $f_2$ are obviously totally disjoint spectra functions.
\end{proof}
\begin{rem}
 When $k=2$, this construction satisfies  the equation (\ref{speceq}) in Proposition \ref{prop:spec}, since the amplitudes in (\ref{Wf}) are exactly $2^{\frac{n}{2}}$ and $2^{\frac{n+2}{2}}$. 
\end{rem}

\begin{op}
Provide another generic construction methods of totally disjoint spectra plateaued functions for which the duals $f^*_{[i]}$ and/or Walsh supports $S_{f^*_{[i]}}$ can be fixed in advance.
\end{op}

\section{ANF-constructions of 5-valued spectra functions}\label{sec:CF}

In this section we mainly employ the so-called composite form (CF)-representation \cite{SHCF} of Boolean functions for the purpose of deriving secondary constructions of 5-valued spectra functions satisfying  the conditions imposed by Theorem \ref{th:dec}-$(iii)$. 


Let $\mathfrak{f}:\mathbb{F}^n_2 \rightarrow \mathbb{F}_2$ be an arbitrary Boolean function given in the $CF$-representation as
\begin{eqnarray*}\label{F2}
\mathfrak{f}(x)=f(H(x))=f(h_1(x),\ldots,h_k(x)),
\end{eqnarray*}
with the form $f:\mathbb{F}^k_2 \rightarrow \mathbb{F}_2$ and vectorial function $H=(h_1,\ldots,h_k):\mathbb{F}^n_2 \rightarrow \mathbb{F}^k_2$ which is not unique for a given form $f$.
 %
%
The WHT of the function $\mathfrak{f}$ is given by \cite[Proposition 9.1]{CarletBoolean}  as
\begin{eqnarray}\label{mainF}
W_\mathfrak{f}(u)=\sum_{x\in \mathbb{F}^n_2}(-1)^{f(h_1(x),\ldots,h_k(x))\oplus u\cdot x}=2^{-k}\sum_{\omega\in \mathbb{F}^k_2}W_f(\omega)W_{\omega\cdot (h_1,\ldots,h_k)}(u),\;\;\;u\in \mathbb{F}^n_2.
\end{eqnarray}
We mainly consider  plateaued forms due to the fact that their spectral constructions is essentially given by Theorem \ref{th:plateH}. In particular, if $f$ is an $s$-plateaued function in $k$ variables it implies that $W_{\mathfrak{f}}$ in (\ref{mainF}) can be written as
\begin{eqnarray}\label{mainF2}
W_\mathfrak{f}(u)=2^{\frac{s-k}{2}}\sum_{\omega\in S_f}(-1)^{f^*(\omega)}W_{\omega\cdot (h_1,\ldots,h_k)}(u),\;\;\;u\in \mathbb{F}^n_2.
\end{eqnarray}

\subsection{Secondary constructions using disjoint variable spaces}\label{sec:sepvar}

We follow the notation from \cite{SHCF} and represent the Walsh support $S_f$ of a plateaued function  $f:\mathbb{F}^k_2\rightarrow \mathbb{F}_2$ as $S_f=\Delta\wr \Theta$, where $\Delta$ is the set of the first $t$ ($<k$) coordinates of vectors $\omega\in S_f\subseteq \mathbb{F}^k_2$ and $\Theta$ is the set of the remaining $m=k-t$ coordinates of $\omega.$ More precisely, an arbitrary vector $\omega=(\omega_1,\ldots,\omega_t,\omega_{t+1},\ldots,\omega_k)\in S_f$  will be written as $\omega=(\delta,\theta)\in \Delta\wr \Theta=S_f$, where $\delta=(\omega_1,\ldots,\omega_t)\in \Delta$ and $\theta=(\omega_{t+1},\ldots,\omega_k)\in \Theta.$

Additionally, assuming that $\Theta$ is not a multi-set, for an arbitrary vector $\omega\in S_f=\Delta\wr \Theta$  written as $\omega=(\delta,\theta)$, by $\vartheta_{\omega}: \Theta\rightarrow \Delta$ we denote the function which maps $\theta$ to $\delta$, hence $\vartheta_{\omega}(\theta)=\delta$ (or simply $\vartheta(\theta)=\delta$ if it is clear that $(\delta,\theta)\in S_f$). 

Now we slightly extend \cite[Lemma 4.1]{SHCF} and recall Theorem 4.2-$(iii)$ given in \cite{SHCF} which is important for our main goal.
\begin{lemma}\label{rotlemma}
Let $\mathfrak{f}:\mathbb{F}^r_2\times\mathbb{F}^m_2\rightarrow \mathbb{F}_2$ be given as $\mathfrak{f}(x,y)=f(H(x,y))$, where the Walsh support of $f:\mathbb{F}^k_2\rightarrow \mathbb{F}_2$ can be written as  $S_f=\Delta\wr \Theta$ ($m=k-t$, $t \geq1$). Let $H(x,y)=(h_1(x,y),\ldots,h_k(x,y)):\mathbb{F}^r_2\times\mathbb{F}^m_2\rightarrow \mathbb{F}^k_2$ be a vectorial function such that
\begin{eqnarray}\label{hi}
\left\{\begin{array}{cc}
    h_i(x,y)=h_i(x), & i=1,\ldots,t,\;\; x\in \mathbb{F}^r_2, \\
    (h_{t+1}(x,y),\ldots,h_k(x,y))=(y_1,\ldots,y_m)=y\in  \mathbb{F}^m_2, & t+m= k.
  \end{array}\right.
\end{eqnarray}
Then, for any  $(u,v)\in \mathbb{F}^r_2\times\mathbb{F}^m_2$ the WHT of  $\mathfrak{f}=f(h_1,\ldots,h_k)$ is given by
\begin{eqnarray}\label{eq:wal}
W_\mathfrak{f}(u,v)=\left\{\begin{array}{cc}
                             2^{-t}\sum_{(\delta,v)\in S_f=\Delta\wr \Theta}W_{f}(\delta,v)W_{\delta\cdot (h_{1},\ldots,h_t)}(u), & v\in \Theta \\
                             0, & v\not\in \Theta
                           \end{array}
\right..
\end{eqnarray}
\end{lemma}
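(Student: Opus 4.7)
The plan is to start from the general composition formula (\ref{mainF}), applied with input $(u,v)\in \mathbb{F}_2^r\times \mathbb{F}_2^m$ in place of $u$, and exploit the very specific structure of $H$ given by (\ref{hi}) to collapse one of the two sums. Concretely, for $\omega=(\delta,\theta)\in \mathbb{F}_2^t\times \mathbb{F}_2^m$, the linear combination $\omega\cdot(h_1,\ldots,h_k)(x,y)$ splits as $\delta\cdot(h_1(x),\ldots,h_t(x))\oplus \theta\cdot y$, because the last $m$ coordinates of $H$ are just the identity on $y$. Hence the Walsh transform of $\omega\cdot H$ at $(u,v)$ factors as
\begin{equation*}
W_{\omega\cdot(h_1,\ldots,h_k)}(u,v)=\Bigl(\sum_{x\in \mathbb{F}_2^r}(-1)^{\delta\cdot(h_1(x),\ldots,h_t(x))\oplus u\cdot x}\Bigr)\Bigl(\sum_{y\in \mathbb{F}_2^m}(-1)^{(\theta\oplus v)\cdot y}\Bigr).
\end{equation*}

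Next, I would observe that the $y$-sum equals $2^m$ when $\theta=v$ and $0$ otherwise, while the $x$-sum is exactly $W_{\delta\cdot(h_1,\ldots,h_t)}(u)$. Substituting into (\ref{mainF}), the double sum over $(\delta,\theta)\in \mathbb{F}_2^k$ collapses to a single sum over those $\delta\in \mathbb{F}_2^t$ such that $\theta=v$, giving
\begin{equation*}
W_\mathfrak{f}(u,v)=2^{-k}\cdot 2^{m}\sum_{\delta\in \mathbb{F}_2^t}W_f(\delta,v)\,W_{\delta\cdot(h_1,\ldots,h_t)}(u)=2^{-t}\sum_{\delta\in \mathbb{F}_2^t}W_f(\delta,v)\,W_{\delta\cdot(h_1,\ldots,h_t)}(u),
\end{equation*}
since $k-m=t$. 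Finally, I would note that because $S_f=\Delta\wr \Theta$ is the Walsh support of $f$, the coefficient $W_f(\delta,v)$ vanishes whenever $v\notin \Theta$, so the whole sum is $0$ in that case, while for $v\in \Theta$ the sum effectively ranges over those $\delta$ with $(\delta,v)\in S_f$, which is precisely the formula in (\ref{eq:wal}).

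There is no real obstacle here: the proof is essentially a direct computation that uses only the factorization of the exponential over the $y$-variable (which is possible thanks to the identity block in $H$) and the definition of the Walsh support. The only point requiring a bit of care is bookkeeping of the normalization constants $2^{-k}$, $2^{m}$, and $2^{-t}$, and the reminder that although the inner sum in (\ref{eq:wal}) is written as ranging over $(\delta,v)\in S_f$, it is equivalent to a sum over $\delta\in \mathbb{F}_2^t$ once one invokes the vanishing of $W_f$ off the support.
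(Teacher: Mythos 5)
Your computation is correct: splitting $\omega=(\delta,\theta)$, using the identity block in $H$ to factor the character sum over $y$ (which forces $\theta=v$ and contributes $2^m$), and then absorbing $2^{-k}\cdot 2^m=2^{-t}$ and invoking the vanishing of $W_f$ off $S_f=\Delta\wr\Theta$ is exactly the intended argument. The paper itself omits the proof, deferring to the cited Lemma 4.1 of the reference it slightly extends, and your derivation is the standard one that justifies both cases of (\ref{eq:wal}).
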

\begin{rem}\label{remlema}
 If $\mathfrak{f}=a\oplus d(h_1,\ldots,h_k)$ with $a(x,y)=a(x)$, then in (\ref{eq:wal}) instead of $W_{\delta\cdot (h_1,\ldots,h_t)}(u)$ we have $W_{a\oplus\delta\cdot (h_1,\ldots,h_t)}(u)$. Note that in \cite[Lemma 4.1]{SHCF} (in comparison to Lemma \ref{rotlemma}) we have that $\Theta=\mathbb{F}^m_2$.
\end{rem}
\begin{theo}\cite{SHCF}\label{P1}
Let $\mathfrak{f}:\mathbb{F}^r_2\times\mathbb{F}^m_2\rightarrow \mathbb{F}_2$ be given as $\mathfrak{f}(x,y)=f(H(x,y))=f(h_{1}(x),\ldots,h_s(x),y),$ where $f:\mathbb{F}^k_2\rightarrow \mathbb{F}_2$ is $s$-plateaued  and $H=(h_1,\ldots,h_k)$ is a vectorial function defined by (\ref{hi}) ($t=s$). Assume that $S_f=\Delta\wr \Theta$  with $\Theta=\mathbb{F}^m_2$ and ${\bf 0}_s\not\in \Delta$ ($m\geq 2$ is even, $s+m=k$). 
%

\noindent If for every $\delta\in \Delta$ it holds that  $\delta\cdot (h_1,\ldots,h_s)$ is $c_{\delta}$-plateaued  with (possibly) different amplitudes $2^{\frac{r+c_{\delta}}{2}}$, then  $W_\mathfrak{f}(\omega) \in \{0,\pm 2^{\frac{r+m+c_{\delta}}{2}}:\delta\in \Delta\}$. 
\end{theo}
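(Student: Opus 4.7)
The plan is to apply Lemma~\ref{rotlemma} directly to $\mathfrak{f}(x,y)=f(h_1(x),\ldots,h_s(x),y)$, and then exploit the plateauedness of $f$ together with the assumed plateauedness of each $\delta\cdot(h_1,\ldots,h_s)$ to read off the possible Walsh values.

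First I would invoke Lemma~\ref{rotlemma} with $t=s$ to obtain, for every $(u,v)\in\mathbb{F}_2^r\times\mathbb{F}_2^m$,
\[
W_\mathfrak{f}(u,v)=2^{-s}\sum_{(\delta,v)\in S_f}W_f(\delta,v)\,W_{\delta\cdot(h_1,\ldots,h_s)}(u).
\]
Since $\Theta=\mathbb{F}_2^m$, the vanishing branch of~(\ref{eq:wal}) is never triggered, so this formula is valid for all $v$.

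The key step is to show that the inner sum contains a single term. Because $f$ is $s$-plateaued on $\mathbb{F}_2^k$ with $k=s+m$, its Walsh support satisfies $|S_f|=2^{k-s}=2^m$. The projection $(\delta,\theta)\mapsto\theta$ sends $S_f$ onto $\Theta$ by definition of $\Theta$, and the hypothesis $\Theta=\mathbb{F}_2^m$ gives $|\Theta|=2^m=|S_f|$; a surjection between finite sets of equal cardinality is a bijection. Hence for each $v\in\mathbb{F}_2^m$ there is a unique $\delta_v\in\Delta$ with $(\delta_v,v)\in S_f$, and the identity collapses to
\[
W_\mathfrak{f}(u,v)=2^{-s}\,W_f(\delta_v,v)\,W_{\delta_v\cdot(h_1,\ldots,h_s)}(u).
\]

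To finish I would substitute the plateaued amplitudes. Since $(\delta_v,v)\in S_f$ and $f$ is $s$-plateaued on $\mathbb{F}_2^k=\mathbb{F}_2^{s+m}$, we have $W_f(\delta_v,v)=\pm 2^{(k+s)/2}=\pm 2^{(2s+m)/2}$. By hypothesis $\delta_v\cdot(h_1,\ldots,h_s)$ is $c_{\delta_v}$-plateaued on $\mathbb{F}_2^r$, so $W_{\delta_v\cdot(h_1,\ldots,h_s)}(u)\in\{0,\pm 2^{(r+c_{\delta_v})/2}\}$. Adding the three exponents gives
\[
-s+\tfrac{2s+m}{2}+\tfrac{r+c_{\delta_v}}{2}=\tfrac{r+m+c_{\delta_v}}{2},
\]
yielding $W_\mathfrak{f}(u,v)\in\{0,\pm 2^{(r+m+c_{\delta_v})/2}\}$ with $\delta_v\in\Delta$, as claimed. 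The only substantive point is the bijectivity argument that reduces the sum to a single term; everything else is bookkeeping with signs and powers of two. The assumption $\mathbf{0}_s\notin\Delta$ is used implicitly to ensure that each $\delta\cdot(h_1,\ldots,h_s)$ is a genuine (nonconstant) function for which plateauedness is meaningful.
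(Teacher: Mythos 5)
Your argument is correct, and it is exactly the intended one: the paper does not reprove Theorem~\ref{P1} (it is recalled from \cite{SHCF}), but its own analogous proofs (e.g.\ of Theorems~\ref{th:C1} and~\ref{th:C3}-$(i)$) proceed by the same route --- apply Lemma~\ref{rotlemma}, use $\#S_f=2^{k-s}=2^m=\#\Theta$ to collapse the sum to the single term indexed by $\vartheta(v)=\delta_v$, and then combine the amplitudes $2^{\frac{2s+m}{2}}$ and $2^{\frac{r+c_{\delta_v}}{2}}$ with the factor $2^{-s}$. Your bijectivity justification and exponent bookkeeping are both sound, so nothing is missing.
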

\begin{rem}
If in Theorem \ref{P1} the functions $\delta\cdot (h_1,\ldots,h_s)$ (where $r,m$ are even) are bent or semi-bent on $\mathbb{F}^r_2$ (thus having that $c_{\delta}\in\{0,2\}$), then $\mathfrak{f}$ is a 5-valued spectra function. In general, this result provides  a generic method of constructing functions which are not necessarily 5-valued spectra functions. 
\end{rem}
%
The following result is a straightforward analysis when  a vectorial function $H:\FB^r \times \FB^m \rightarrow \FB$ is given as $H(x,y)=(a(x),h_1(x),h_2(x),y_1,y_2)$, which corresponds to the case when $r=s=3$ and $m=2$.
\begin{theo}\label{th:C1}
Let $\mathfrak{f}:\mathbb{F}^r_2\times\mathbb{F}^2_2\rightarrow \mathbb{F}_2$ $(n=r+2$, $r$ even$)$ be given as
$$\mathfrak{f}(x,y)=a(x)\oplus (h_1(x) \oplus y_1) (h_2(x) \oplus y_2),\;\;\;x\in \mathbb{F}^r_2,\;\;y=(y_1,y_2)\in \mathbb{F}^2_2.$$
If any of the following conditions hold:
$$\left\{\begin{array}{l}
           a\;\;\text{is bent},\;\;g=h_1=h_2\;\;\text{and}\;\;a\oplus g\;\;\text{is semi-bent}, \\
           a\;\;\text{is bent},\;\;h_1\in \mathcal{A}_r\;\;\text{and}\;\;a\oplus h_2\;\;\text{is semi-bent}, \\
           a\;\;\text{has 5-valued spectra}\;\;\text{and}\;\;h_1,h_2\in \mathcal{A}_r, \\
            a\;\;\text{has 5-valued spectra}\;\;\text{and}\;\;h_1\in \mathcal{A}_r\;\;\text{and}\;\;a\oplus h_2\;\;\text{is (semi-)bent}, \\
         \end{array}
\right.$$
 then $\mathfrak{f}$ is 5-valued spectra function and  $W_{\mathfrak{f}}(u,v)\in \{0,\pm 2^{\frac{n}{2}},\pm 2^{\frac{n+2}{2}}\},$ for $(u,v)\in \mathbb{F}^r_2\times\mathbb{F}^2_2.$
\end{theo}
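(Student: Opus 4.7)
The plan is to derive an explicit closed-form expression for $W_{\mathfrak{f}}(u,v)$ that decouples the variables $x$ and $y$, after which the claim reduces to a routine case analysis of the four listed hypotheses. Writing $\mathfrak{f}(x,y)=a(x)\oplus F(h_1(x)\oplus y_1,\,h_2(x)\oplus y_2)$ where $F(z_1,z_2)=z_1z_2$ is a bent function on $\mathbb{F}_2^2$ with $W_F(v)=2(-1)^{v_1v_2}$, I would split the Walsh sum as
\begin{equation*}
W_{\mathfrak{f}}(u,v)=\sum_{x\in\mathbb{F}_2^r}(-1)^{a(x)\oplus u\cdot x}\sum_{y\in\mathbb{F}_2^2}(-1)^{(h_1(x)\oplus y_1)(h_2(x)\oplus y_2)\oplus v\cdot y},
\end{equation*}
perform the change of variable $z=y\oplus(h_1(x),h_2(x))$ in the inner sum (which evaluates to $(-1)^{v_1 h_1(x)\oplus v_2 h_2(x)}\,W_F(v)$), and arrive at the key identity
\begin{equation*}
W_{\mathfrak{f}}(u,v)=2(-1)^{v_1v_2}\,W_{a\oplus v_1h_1\oplus v_2h_2}(u),\qquad (u,v)\in\mathbb{F}_2^r\times\mathbb{F}_2^2.
\end{equation*}

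Since $n=r+2$, the desired containment $W_{\mathfrak{f}}(u,v)\in\{0,\pm 2^{n/2},\pm 2^{(n+2)/2}\}$ is, via the leading factor $2$, equivalent to requiring that each of the four shifted functions
\begin{equation*}
a,\quad a\oplus h_1,\quad a\oplus h_2,\quad a\oplus h_1\oplus h_2
\end{equation*}
has Walsh spectrum contained in $\{0,\pm 2^{r/2},\pm 2^{(r+2)/2}\}$, i.e.\ each must be bent, semi-bent, or $5$-valued with exactly those two amplitudes on $\mathbb{F}_2^r$.

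The remaining step is a case-by-case verification of the four hypotheses, relying only on the elementary fact that adding an element of $\mathcal{A}_r$ preserves the multi-set of Walsh magnitudes. Under (i), the four functions collapse to $a$, $a\oplus g$, $a\oplus g$, $a$, giving bent, semi-bent, semi-bent, bent. Under (ii), $a\oplus h_1$ is bent (affine shift of a bent function) and $a\oplus h_1\oplus h_2=(a\oplus h_2)\oplus h_1$ has the same magnitudes as the semi-bent $a\oplus h_2$. Under (iii), all four functions are $5$-valued since each is an affine shift of $a$. Under (iv), $a\oplus h_1$ is $5$-valued and again $a\oplus h_1\oplus h_2$ inherits the magnitudes of the (semi-)bent $a\oplus h_2$. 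In every case the four Walsh spectra lie in $\{0,\pm 2^{r/2},\pm 2^{(r+2)/2}\}$, so $W_{\mathfrak{f}}$ takes values in $\{0,\pm 2^{n/2},\pm 2^{(n+2)/2}\}$ as claimed. The main obstacle is not analytic but the careful case bookkeeping: in hypotheses (ii) and (iv) one must notice that the fourth function $a\oplus h_1\oplus h_2$ automatically inherits its spectral type from $a\oplus h_2$ precisely because $h_1$ lies in $\mathcal{A}_r$.
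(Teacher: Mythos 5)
Your proof is correct, and it arrives at exactly the same reduction as the paper: the Walsh coefficient $W_{\mathfrak{f}}(u,v)$ equals $\pm 2\,W_{a\oplus v_1h_1\oplus v_2h_2}(u)$, after which both arguments finish with the identical four-case bookkeeping on $a$, $a\oplus h_1$, $a\oplus h_2$, $a\oplus h_1\oplus h_2$. The only difference is how that key identity is obtained. The paper views $\mathfrak{f}$ through its composite form $f(x_1,\ldots,x_5)=x_1\oplus(x_2\oplus x_4)(x_3\oplus x_5)$, checks that this form is $3$-plateaued with Walsh support $(\{1\}\times\mathbb{F}_2^2)\wr\mathbb{F}_2^2$ and dual sequence $(1,1,1,-1)$, and then invokes Lemma \ref{rotlemma} together with Remark \ref{remlema}; your sign $(-1)^{v_1v_2}$ is precisely the paper's $(-1)^{f^*(\delta,v)}$. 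You instead derive the identity by a direct change of variables $z=y\oplus(h_1(x),h_2(x))$ in the inner sum over $y$, using that $z_1z_2$ is bent on $\mathbb{F}_2^2$. Your route is more elementary and self-contained (it needs none of the CF machinery of Section \ref{sec:CF}), while the paper's route is deliberately cast as an instance of the general composite-form method so that the same template carries over to Theorems \ref{th:C3} and \ref{th:C4}, where the form is no longer as simple and a bare-hands computation would be less transparent. One shared (and equally implicit) point in both arguments: in the two cases where $a$ is assumed to have $5$-valued spectra, the stated conclusion $W_{\mathfrak{f}}\in\{0,\pm 2^{n/2},\pm 2^{(n+2)/2}\}$ requires the amplitudes of $a$ to be $2^{r/2}$ and $2^{(r+2)/2}$; neither you nor the paper makes this explicit, so it is not a gap relative to the paper's own standard of rigor.
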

\proof One may verify that the form $f(x_1,\ldots,x_5)=x_1 \oplus (x_2 \oplus x_4) (x_3 \oplus x_5)$ of  $\mathfrak{f}$ is a $3$-plateaued function in $k=5$ variables whose Walsh support is $$S_f=\Delta\wr \mathbb{F}^2_2=(\{1\}\times \mathbb{F}^2_2)\wr \mathbb{F}^2_2=\{(1, 0, 0, 0, 0), (1, 0, 1, 0, 1),(1, 1, 0, 1, 0), (1, 1, 1, 1, 1)\},$$ and its dual is $\chi_{f^*}=(1,1,1,-1)$. Since $\mathfrak{f}(x,y_1,y_2)=f(a(x),h_1(x),h_2(x),y_1,y_2)$, then by Lemma \ref{rotlemma} and Remark \ref{remlema} the WHT of $\mathfrak{f}$ at arbitrary $(u,v)\in \mathbb{F}^r_2\times  \mathbb{F}^2_2$ is given as:
\begin{eqnarray*}
W_{\mathfrak{f}}(u,v)=2\hskip -3mm\sum_{(\delta,v)\in \Delta\wr \mathbb{F}^2_2}(-1)^{f^*(\delta,v)}W_{\delta\cdot (a,h_1,h_2)}(u)=
\left\{\begin{array}{lc}
2(-1)^{f^*(1,0,0;0,0)}W_a(u), & v=(0,0) \\
2(-1)^{f^*(1,0,1;0,1)}W_{a\oplus h_2}(u), & v=(0,1) \\
2(-1)^{f^*(1,1,0;1,0)}W_{a\oplus h_1}(u), & v=(1,0)  \\
2(-1)^{f^*(1,1,1;1,1)}W_{a\oplus h_1\oplus h_2}(u), & v=(1,1)
\end{array}\right..
\end{eqnarray*}
Under the given conditions  the statement easily follows. \qed

\begin{rem}\label{rem:g}
Notice that the requirement on $a\oplus g$ to be semi-bent ($a$  is bent or 5-valued spectra function) can be easily satisfied since there are no conditions imposed on $g$. Namely, one can take any semi-bent function $d$ on $\mathbb{F}^r_2$, and then just take the function $g=a\oplus d$. 
\end{rem}
This method can be efficiently employed to provide 5-valued spectra functions on $\FB^n$ which can  be used to define the  restrictions  of a bent function on $\FB^{n+2}$ in terms of the 4-bent decomposition, cf. Theorem \ref{th:dec}-$(iii)$.
\begin{theo}\label{th:C2}
Let functions $\mathfrak{f}_1,\ldots,\mathfrak{f}_4:\mathbb{F}^r_2\times\mathbb{F}^2_2\rightarrow \mathbb{F}_2$ $(n=r+2$, $r\geq 4$ even$)$ be given as
$$\mathfrak{f}_i(x,y)=a_i(x)\oplus (g_i(x) \oplus y_1) (g_i(x) \oplus y_2),\;\;\;x\in \mathbb{F}^r_2,\;\;y=(y_1,y_2)\in \mathbb{F}^2_2.$$
If $a_i$ are bent functions such that $a^*_1\oplus a^*_2\oplus a^*_3\oplus a^*_4=1$, and $a_i\oplus g_i$ are pairwise disjoint spectra  semi-bent functions on $\mathbb{F}^r_2$, then $\mathfrak{f}_i$  satisfy the condition of Theorem \ref{th:dec}-$(iii)$.
\end{theo}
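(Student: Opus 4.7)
The plan is to apply the first bullet of Theorem \ref{th:C1} (which covers exactly the hypothesis "$a$ bent, $h_1=h_2=g$, $a\oplus g$ semi-bent") to each $\mathfrak{f}_i$, so that each $\mathfrak{f}_i$ is automatically 5-valued with $W_{\mathfrak{f}_i}\in\{0,\pm 2^{n/2},\pm 2^{(n+2)/2}\}$, and then verify items (a) and (b) of Theorem \ref{th:dec}-$(iii)$ by reading the supports $S^{[1]}_{\mathfrak{f}_i},S^{[2]}_{\mathfrak{f}_i}$ and the duals $f^*_{[2],i}$ directly off the Walsh-coefficient blocks indexed by $v\in\mathbb{F}_2^2$.

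First, the four block formulas from the proof of Theorem \ref{th:C1}, specialized to $h_1=h_2=g_i$ and using the dual sequence $\chi_{f^*}=(1,1,1,-1)$ of the form $f(x_1,\ldots,x_5)=x_1\oplus(x_2\oplus x_4)(x_3\oplus x_5)$, give for every $u\in\mathbb{F}_2^r$:
\begin{align*}
W_{\mathfrak{f}_i}(u,0,0) &= 2W_{a_i}(u), & W_{\mathfrak{f}_i}(u,1,1) &= -2W_{a_i}(u),\\
W_{\mathfrak{f}_i}(u,0,1) &= 2W_{a_i\oplus g_i}(u), & W_{\mathfrak{f}_i}(u,1,0) &= 2W_{a_i\oplus g_i}(u).
\end{align*}
Because $a_i$ is bent on $\mathbb{F}_2^r$, the first row has constant amplitude $2^{r/2+1}=2^{n/2}$, while $a_i\oplus g_i$ being semi-bent forces the second row into $\{0,\pm 2^{(n+2)/2}\}$, nonzero precisely on $S_{a_i\oplus g_i}$. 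Consequently
\begin{align*}
S^{[2]}_{\mathfrak{f}_i} &= \mathbb{F}_2^r\times\{(0,0),(1,1)\},\\
S^{[1]}_{\mathfrak{f}_i} &= \bigl\{(u,v)\in\mathbb{F}_2^r\times\mathbb{F}_2^2 : u\in S_{a_i\oplus g_i},\ v\in\{(0,1),(1,0)\}\bigr\}.
\end{align*}

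Condition (a) of Theorem \ref{th:dec}-$(iii)$ is then immediate: the pairwise disjointness of the semi-bent supports $S_{a_i\oplus g_i}$ (assumed) propagates to pairwise disjointness of the product sets $S^{[1]}_{\mathfrak{f}_i}$. The first half of condition (b) is also immediate since $S^{[2]}_{\mathfrak{f}_i}$ does not depend on $i$, so all four of them coincide.

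For the remaining XOR identity on duals, I would read off $W_{\mathfrak{f}_i}(u,0,0)=2^{n/2}(-1)^{a_i^*(u)}$ and $W_{\mathfrak{f}_i}(u,1,1)=2^{n/2}(-1)^{a_i^*(u)\oplus 1}$, which under the lexicographic identification of Section \ref{sec:dual5} means $f^*_{[2],i}(u,0,0)=a_i^*(u)$ and $f^*_{[2],i}(u,1,1)=a_i^*(u)\oplus 1$. Summing the four duals pointwise, the hypothesis $a_1^*\oplus a_2^*\oplus a_3^*\oplus a_4^*=1$ yields value $1$ at each $(u,0,0)$; at each $(u,1,1)$ the four extra $1$'s cancel in pairs and the sum is again $1$. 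Hence $f^*_{[2],1}\oplus f^*_{[2],2}\oplus f^*_{[2],3}\oplus f^*_{[2],4}=1$. No step presents a real obstacle; the only mild care is tracking the signs coming from $\chi_{f^*}$ and confirming that the coset-based lexicographic identification is compatible with the pointwise description of $f^*_{[2],i}$, both of which were already handled in the proof of Theorem \ref{th:C1}.
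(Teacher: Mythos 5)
Your proof is correct and follows essentially the same route as the paper's: apply Theorem \ref{th:C1} with $h_1=h_2=g_i$, read off the four Walsh blocks indexed by $v\in\mathbb{F}_2^2$, and verify conditions (a) and (b) of Theorem \ref{th:dec}-$(iii)$ from the resulting supports and duals. In fact your identification $S^{[1]}_{\mathfrak{f}_i}=S_{a_i\oplus g_i}\times\{(0,1),(1,0)\}$ is the correct one consistent with the block formulas (the paper's proof writes $\{(0,0),(1,1)\}$ there, which is a typo), and your explicit cancellation of the four constants $f^*(\vartheta(v),v)$ matches the paper's argument for the dual identity.
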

\proof By Theorem \ref{th:C1}, $\mathfrak{f}_i$ are 5-valued spectra functions, since $a_i$  and $a_i\oplus g_i$ ($g_i=h_1=h_2$) are bent respectively semi-bent on $\mathbb{F}^r_2$. From the proof of Theorem \ref{th:C1} (considering the values of the dual $f^*$) the WHT of $\mathfrak{f}_i$ is given by
\begin{eqnarray*}
W_{\mathfrak{f}_i}(u,v)=
\left\{\begin{array}{lc}
\pm 2W_{a_i}(u), & v\in\{(0,0),(1,1)\} \\
2W_{a_i\oplus g_i}(u), & v\in\{(0,1),(1,0)\}
\end{array}\right.,\;\;\;(u,v)\in \mathbb{F}^r_2\times \mathbb{F}^2_2.
\end{eqnarray*}
Note that the signs "$\pm$" of $2W_{a_i}(u)$ are exactly $(-1)^{f^*(\delta,v)}$ for $\delta\in\{(1,0,0),(1,1,1)\}$.
Since $a_i$ is bent then $|W_{\mathfrak{f}_i}(u,v)|=2^{\frac{n}{2}}$, for $(u,v)\in \mathbb{F}^r_2\times \{(0,0),(1,1)\}.$ Using the notation of Theorem \ref{th:dec}, we have that $S^{[2]}_{\mathfrak{f}_i}=\mathbb{F}^r_2\times \{(0,0),(1,1)\}$, and thus the dual $\mathfrak{f}^*_{[2],i}:S^{[2]}_{\mathfrak{f}}\rightarrow \mathbb{F}_2$ is given as $$\mathfrak{f}^*_{[2],i}(u,v)=a^*_i(u)\oplus f^*(\vartheta(v),v),$$ where $v\in\{(0,0),(1,1)\}.$ Recall that $\vartheta(v)=\delta\in\{(1,0,0),(1,1,1)\}$ for which it holds that $(\delta,v)\in S_f$ (and the vector $\vartheta(v)$ is unique due to the structure of $S_f=(\{1\}\times \mathbb{F}^2_2)\wr \mathbb{F}^2_2$). Since for a fixed vector $v$ we have that $f^*(\vartheta(v),v)$ is fixed (for all $i\in[1,4]$), then $a^*_1\oplus a^*_2\oplus a^*_3\oplus a^*_4=1$ clearly implies that $\mathfrak{f}^*_{[2],1}\oplus \mathfrak{f}^*_{[2],2}\oplus  \mathfrak{f}^*_{[2],3}\oplus  \mathfrak{f}^*_{[2],4}=1$ holds, which is the second condition of Theorem \ref{th:dec}-$(iii)$.

Using the fact that $a_i\oplus g_i$ is semi-bent, we have that $|W_{\mathfrak{f}_i}(u,v)|=2^{\frac{n+2}{2}}$ exactly for $(u,v)\in S_{a_i\oplus g_i}\times \{(0,0),(1,1)\}$, which gives that $S^{[1]}_{\mathfrak{f}_i}=S_{a_i\oplus g_i}\times \{(0,0),(1,1)\}$ ($S_{a_i\oplus g_i}$ is the Walsh support of $a_i\oplus g_i$). Clearly, for vectors $(u,v)\not\in S_{a_i\oplus g_i}\times \{(0,0),(1,1)\}$ we have that $W_{\mathfrak{f}_i}(u,v)=0$. Having the pairwise disjoint spectra  property for $a_i\oplus g_i$ implies that all $S^{[1]}_{\mathfrak{f}_i}$ are pairwise disjoint sets, which means that the first condition of Theorem \ref{th:dec}-$(iii)$ holds.\qed

\begin{ex}
Disjoint spectra semi-bent functions $a_i\oplus g_i:\mathbb{F}^r_2 \rightarrow \FB^2$ can be easily constructed by means of  Theorem \ref{th:plateH} by specifying pairwise disjoint spectra semi-bent functions $d_i:\mathbb{F}^r_2\rightarrow \mathbb{F}_2$ $(i\in[1,4]$, $\#S_{d_i}=2^{r-2})$ using four disjoint Walsh supports each of size $2^{r-2}$. Since there are no special conditions on functions $g_i$ in Theorem \ref{th:C2} (Remark \ref{rem:g}), we simply define $g_i=a_i\oplus d_i$, and thus the functions $a_i\oplus g_i=d_i$ trivially satisfy conditions of Theorem \ref{th:C2}.
\end{ex}
\begin{rem}
Notice that for $n=6$ the extended Walsh spectra of the Dillon's APN permutation $F$ is exactly 5-valued of the form $W_{F_u} \in \{0,\pm 2^{\frac{n}{2}}, \pm 2^{\frac{n+2}{2}}\}$, where $F_u=u\cdot F$ denotes the component function of $F$ chosen by $u \in {\FB^n}^*$. In terms of the amplitudes, this spectra exactly coincide with the spectra of functions designed by means of Theorem \ref{th:C2}.
\end{rem}

\subsection{Employing plateaued functions with non-affine Walsh support}
It is well-known that plateaued functions whose Walsh support is some affine subspace are equivalent to partially bent functions.  Their design in the spectral domain has been recently addressed in \cite{SHEA} and moreover some design methods using non-affine supports (being a more difficult task) have been proposed.   To simplify the necessary conditions for the purpose of designing 5-valued spectra functions, we  use specific plateaued and relatively simple  non-affine Walsh support of special kind.

Assume that  the Walsh support of some 2-plateaued function $f$ on $\FB^6$ corresponds to $S_f=I_{4\times 4}\times \mathbb{F}^2_2$, where $I_{4\times 4}$ is the identity matrix, so that $S_f$ is not affine subspace.
If we order $S_f$ as $S_f=(I_{4\times 4}\times (0,0))\cup \ldots \cup (I_{4\times 4}\times (1,1))$,
then viewing $S_f$ as a matrix of size $16\times 6$  the columns of $S_f$ correspond to truth tables of functions $\phi_{b_i}:\mathbb{F}^4_2\rightarrow \mathbb{F}_2$ ($i\in[1,6]$, $b_i$ has non-zero coordinate at $i$-th position) given by
$$(\phi_{b_1}(x),\ldots,\phi_{b_6}(x))=((1 \oplus x_3) (1 \oplus x_4), (1 \oplus x_3) x_4, x_3 (1 \oplus x_4), x_3x_4, x_1, x_2).$$
Then, by Theorem \ref{th:plateH}, $f$ will be a Boolean function (and then necessarily 2-plateaued) if one can define a dual function $f^*$ (with respect to a given ordering of $S_f$) which is at bent distance to $\Phi_f=\langle \phi_{b_1}, \ldots , \phi_{b_6} \rangle$.
Taking $f^*(x_1,\ldots,x_4)=(x_1,x_2)\cdot (x_3,x_4)$, one can verify that $f^*$ is at bent distance to $\Phi_f$.
Thus, the form $f:\mathbb{F}^6_2\rightarrow \mathbb{F}_2$ constructed by means of $(S_f,f^*)$ is a $2$-plateaued function given by
\begin{eqnarray}\label{eq:formath1}
f(x_1,\ldots,x_6)=(x_2 \oplus x_2 x_5 \oplus x_4 x_5) x_6 \oplus x_3 x_5 (1 \oplus x_6) \oplus x_1 (1 \oplus x_5) (1 \oplus x_6).
\end{eqnarray}
\begin{theo}\label{th:C3}
Let $\mathfrak{f}:\mathbb{F}^r_2\times\mathbb{F}^2_2\rightarrow \mathbb{F}_2$ $(n=r+2$, $r\geq 6$ even$)$ be given in the form $f:\mathbb{F}^6_2\rightarrow \mathbb{F}_2$, defined by (\ref{eq:formath1}), as $\mathfrak{f}(x,y)=f(h_1(x),\ldots,h_4(x),y_1,y_2),$ $x\in \mathbb{F}^r_2,$ $y=(y_1,y_2)\in \mathbb{F}^2_2.$ Then:
\begin{enumerate}[(i)]
\item If $h_i:\mathbb{F}^r_2\rightarrow \mathbb{F}_2$ are pairwise disjoint spectra  plateaued functions such that $h_1,h_2,h_3$ are 2-plateaued and $h_4$ is $4$-plateaued, then $W_\mathfrak{f}(u,v)\in \{0,\pm 2^{\frac{n}{2}},\pm 2^{\frac{n+2}{2}}\}$, $(u,v)\in \mathbb{F}^r_2\times\mathbb{F}^2_2.$
\item Let  $\mathfrak{f}_1,\ldots,\mathfrak{f}_4:\mathbb{F}^r_2\times\mathbb{F}^2_2\rightarrow \mathbb{F}_2$ be given as $\mathfrak{f}_i(x,y)=f(h_{1,i}(x),\ldots,h_{4,i}(x),y)$, where $h_{p,i}$ satisfy properties as in $(i)$, for  $i,p\in[1,4]$. 
    Additionally, let $h_{4,i}$ be pairwise disjoint spectra functions  for all $i\in[1,4]$, and assume that
 $$\left\{\begin{array}{l}
               S_{h_{p,1}}=S_{h_{p,2}}=S_{h_{p,3}}=S_{h_{p,4}}, \\
               h^*_{p,1}\oplus h^*_{p,2}\oplus h^*_{p,3}\oplus h^*_{p,4}=1,\;\;\;\forall p=1,2,3.
             \end{array}
    \right.$$
Then the functions $\mathfrak{f}_i$   satisfy the properties of Theorem \ref{th:dec}-$(iii)$, thus $(\mathfrak{f}_1,\ldots, \mathfrak{f}_4)$ is a 5-valued 4-decomposition of a bent function.
\end{enumerate}
\end{theo}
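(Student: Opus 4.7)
The plan is to reduce Theorem \ref{th:C3}-$(ii)$ to the sufficient direction of Theorem \ref{th:dec}-$(iii)$, applied in ambient dimension $n+2$. First, part $(i)$ already guarantees that each $\mathfrak{f}_i$ is a 5-valued spectra function with $W_{\mathfrak{f}_i}(u,v) \in \{0, \pm 2^{n/2}, \pm 2^{(n+2)/2}\}$, so the $\mathfrak{f}_i$ are legitimate candidates for a 5-valued 4-decomposition. Moreover, the computation underlying $(i)$ (via Lemma \ref{rotlemma} together with the explicit $\Delta = I_{4\times 4}$ and the amplitude $2^{4}$ of the 2-plateaued form $f$) yields the closed form
\[
W_{\mathfrak{f}_i}(u,v) \;=\; \sum_{p=1}^{4} (-1)^{f^*(e_p, v)}\, W_{h_{p,i}}(u),
\]
where $e_1, \ldots, e_4$ are the rows of $I_{4\times 4}$. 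Because $h_{1,i}, \ldots, h_{4,i}$ are pairwise disjoint spectra for each fixed $i$, at most one summand is nonzero at any given $u$, namely $\pm W_{h_{p,i}}(u)$ for the unique $p$ with $u \in S_{h_{p,i}}$.

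Next I would identify the two Walsh supports of $\mathfrak{f}_i$. Since $h_{p,i}$ is 2-plateaued (amplitude $2^{n/2}$) for $p=1,2,3$ and $h_{4,i}$ is 4-plateaued (amplitude $2^{(n+2)/2}$), one reads off
\[
S^{[1]}_{\mathfrak{f}_i} \;=\; S_{h_{4,i}} \times \mathbb{F}_2^2, \qquad S^{[2]}_{\mathfrak{f}_i} \;=\; \bigl(S_{h_{1,i}} \cup S_{h_{2,i}} \cup S_{h_{3,i}}\bigr) \times \mathbb{F}_2^2.
\]
Condition $(a)$ of Theorem \ref{th:dec}-$(iii)$ -- pairwise disjointness of $S^{[1]}_{\mathfrak{f}_i}$ across $i$ -- is then immediate from the hypothesis that the $S_{h_{4,i}}$ are pairwise disjoint. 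The equality of the sets $S^{[2]}_{\mathfrak{f}_i}$ across $i$, which is the first half of condition $(b)$, is likewise immediate from the hypothesis $S_{h_{p,1}} = S_{h_{p,2}} = S_{h_{p,3}} = S_{h_{p,4}}$ for $p = 1, 2, 3$.

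The remaining content of condition $(b)$ is the dual identity $\mathfrak{f}^*_{[2],1} \oplus \mathfrak{f}^*_{[2],2} \oplus \mathfrak{f}^*_{[2],3} \oplus \mathfrak{f}^*_{[2],4} = 1$ on the common support $S^{[2]}_{\mathfrak{f}_i}$. For $(u,v)$ in this support the closed form yields $\mathfrak{f}^*_{[2],i}(u,v) = f^*(e_p, v) \oplus h^*_{p,i}(u)$, and the key point -- really the only subtlety of the proof -- is that the selected index $p \in \{1, 2, 3\}$ does \emph{not} depend on $i$, precisely because $S_{h_{p,i}}$ is independent of $i$ for $p=1,2,3$. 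XOR-ing over $i = 1, \ldots, 4$ then cancels the four copies of $f^*(e_p, v)$, and the hypothesis $h^*_{p,1} \oplus h^*_{p,2} \oplus h^*_{p,3} \oplus h^*_{p,4} = 1$ supplies the desired $1$. Theorem \ref{th:dec}-$(iii)$ then produces the bent function on $\mathbb{F}_2^{n+2}$ whose 4-decomposition is $(\mathfrak{f}_1, \ldots, \mathfrak{f}_4)$, completing the proof.
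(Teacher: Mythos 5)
Your argument for part (ii) is correct and follows essentially the same route as the paper: the same closed form $W_{\mathfrak{f}_i}(u,v)=\sum_{p}(-1)^{f^*(e_p,v)}W_{h_{p,i}}(u)$ from Lemma \ref{rotlemma}, the same identification $S^{[1]}_{\mathfrak{f}_i}=S_{h_{4,i}}\times\mathbb{F}_2^2$ and $S^{[2]}_{\mathfrak{f}_i}=\bigcup_{p=1,2,3}S_{h_{p,i}}\times\mathbb{F}_2^2$, and the same cancellation of the $i$-independent term $f^*(e_p,v)$ in the four-fold XOR (which you in fact track more explicitly than the paper does). Your treatment of part (i) is only implicit, but the disjoint-spectra observation plus the amplitude count $2^{(r+2)/2}=2^{n/2}$ versus $2^{(r+4)/2}=2^{(n+2)/2}$ that you state is exactly the paper's proof of (i).
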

\proof $i)$ Since $S_f=I_{4\times 4}\times\mathbb{F}^2_2$, by Lemma \ref{rotlemma}, at arbitrary $(u,v)\in \mathbb{F}^r_2\times \mathbb{F}^2_2$, the WHT of $\mathfrak{f}$ ($t=4$, $s=2$) is given by
\begin{eqnarray*}
W_{\mathfrak{f}}(u,v)=\sum_{(\delta,v)\in S_f=I_{4\times 4}\times \Theta} (-1)^{f^*(\delta,v)}W_{\delta\cdot (h_1,\ldots,h_4)}(u).
\end{eqnarray*}
Due to the structure of the form (\ref{eq:formath1}), one can notice that for a fixed $v\in \mathbb{F}^2_2$  we have that the values $\{(-1)^{f^*(\delta,v)}:\delta\in I_{4\times 4}\}$ constitute the rows of the Sylvester-Hadamard matrix.  This is because   $f^*(x_1,\ldots,x_4)=(x_1,x_2)\oplus (x_3,x_4)$ is a concatenation of linear functions (with respect to given ordering of $S_f=I_{4\times 4}\times \mathbb{F}^2_2$ mentioned earlier).  Consequently, we have that 
\begin{eqnarray*}
\left(
  \begin{array}{c}
    W_{\mathfrak{f}}(u,00) \\
    W_{\mathfrak{f}}(u,01) \\
    W_{\mathfrak{f}}(u,10) \\
    W_{\mathfrak{f}}(u,11) \\
  \end{array}
\right)=\left(
          \begin{array}{cccc}
            1 & 1  & 1 & 1 \\
            1 & -1 & 1 & -1 \\
            1 & 1 & -1 & -1 \\
            1 & -1 & -1 & 1 \\
          \end{array}
        \right)\left(
                 \begin{array}{c}
                   W_{h_1}(u) \\
                  W_{h_2}(u) \\
                   W_{h_3}(u) \\
                   W_{h_4}(u) \\
                 \end{array}
               \right)= H_4\left(
                 \begin{array}{c}
                   W_{h_1}(u) \\
                  W_{h_2}(u) \\
                   W_{h_3}(u) \\
                   W_{h_4}(u) \\
                 \end{array}
               \right).
\end{eqnarray*}
Now, since $h_1,h_2,h_3$ are $2$-plateaued ($\#S_{h_i}=2^{r-2}$, $i=1,2,3$) and $h_4$ is $4$-plateaued ($\#S_{h_4}=2^{r-4}$), then clearly $S_{h_i}$ do not partition the space $\mathbb{F}^r_2$ and thus there exist vectors $(u,v)\in \mathbb{F}^r_2\times \mathbb{F}^2_2$ such that $W_{\mathfrak{f}}(u,v)=0$. In general, we have that $W_{\mathfrak{f}}(u,v)$ is precisely given as
$$W_{\mathfrak{f}}(u,v)=\left\{\begin{array}{cc}
 0, & u\not\in S_{h_i},\;\;\forall i\in[1,4],\\
                                 W_{h_i}(u)=2^{\frac{r+2}{2}}(-1)^{h^*_i(u)}, & u\in S_{h_i},\;\;i\in{1,2,3}, \\
                                 W_{h_4}(u)=2^{\frac{r+4}{2}}(-1)^{h^*_4(u)}, & u\in S_{h_4},
                               \end{array}
\right.\;\;\;\forall v\in \mathbb{F}^2_2.$$
Since $n=r+2$, we clearly have that $\mathfrak{f}$ is 5-valued spectra function with $W_\mathfrak{f}(u,v)\in \{0,\pm 2^{\frac{n}{2}},\pm 2^{\frac{n+2}{2}}\}$.

$ii)$ From the previous computation we have that
$$S^{[1]}_{\mathfrak{f}_i}=\{(u,v)\in \mathbb{F}^r_2\rightarrow \mathbb{F}_2: |W_{\mathfrak{f}_i}(u,v)|=2^{\frac{n+2}{2}}\}=S_{h_{4,i}}\times \mathbb{F}^2_2,$$
$$S^{[2]}_{\mathfrak{f}_i}=\{(u,v)\in \mathbb{F}^r_2\rightarrow \mathbb{F}_2: |W_{\mathfrak{f}_i}(u,v)|=2^{\frac{n}{2}}\}=\bigcup_{p=1,2,3} (S_{h_{p,i}}\times \mathbb{F}^2_2).$$
Since $S_{h_{4,i}}$ and $S_{h_{4,j}}$ are pairwise disjoint  for $i \neq j$, then apparently $S^{[1]}_{\mathfrak{f}_i}$ are pairwise disjoint which implies that the first condition of Theorem \ref{th:dec}-$(iii)$ holds. Furthermore, since $S_{h_{p,1}}=S_{h_{p,2}}=S_{h_{p,3}}=S_{h_{p,4}}$ holds for all $p=1,2,3$, then $S^{[2]}_{\mathfrak{f}_i}$ $(i\in[1,4])$ are  equal as sets. Also, the dual $\mathfrak{f}^*_{[2],i}:S^{[2]}_{\mathfrak{f}_i} \rightarrow \mathbb{F}_2$ is given by $\mathfrak{f}^*_{[2],i}(u,v)=h^*_{p,i}(u)$ for $u\in S_{h_{p,i}}$ ($v\in \mathbb{F}^2_2$), where $p=1,2,3$. Consequently, the set of equalities $$h^*_{p,1}\oplus h^*_{p,2}\oplus h^*_{p,3}\oplus h^*_{p,4}=1; \;\; \textnormal{ for } p=1,2,3, $$ implies that $\mathfrak{f}^*_{[2],1}\oplus \mathfrak{f}^*_{[2],2}\oplus \mathfrak{f}^*_{[2],3}\oplus \mathfrak{f}^*_{[2],4}=1$, which is the second condition of Theorem \ref{th:dec}-$(iii)$.\qed

\begin{rem}
The conditions of Theorem \ref{th:C3} can be easily satisfied if we use Theorem \ref{th:plateH}. More precisely, one can design pairwise disjoint spectra functions by taking disjoint  Walsh supports which  are affine subspaces. Then, the  duals have to be bent where additionally  $h^*_{p,1}\oplus h^*_{p,2}\oplus h^*_{p,3}\oplus h^*_{p,4}=1$ holds. 
\end{rem}
So far  our  results have been  based on Theorem \ref{P1}, that is when $\Theta=\mathbb{F}^m_2$ and some of the coordinate functions $h_i$ are linear (see relation (\ref{hi})). 
In what follows, we derive new constructions with non-linear coordinate functions (with disjoint variables) which are based on ideas present in \cite[Section 4.3]{SHCF}. A plateaued  form is constructed using the Walsh support and dual given by
$$S_f=\left(
        \begin{array}{ccccc}
          1 & 0 & 0 & 0 & 1  \\
          1 & 0 & 1 & 0 & 1  \\
          1 & 1 & 0 & 1 & 0  \\
          1 & 1 & 1 & 1 & 0  \\
        \end{array}
      \right),\;\;\;f^*(x_1,x_2)=x_1x_2,
$$
which using  Theorem \ref{th:plateH} gives that  $f:\mathbb{F}^5_2\rightarrow \mathbb{F}_2$  is a $3$-plateaued function given by
\begin{eqnarray}\label{eq:formth22}
f(x_1,\ldots,x_5)=x_1 \oplus x_5 \oplus x_3 (x_2 \oplus x_4 \oplus x_5).
\end{eqnarray}
\begin{theo}\label{th:C4}
Let $\mathfrak{f}:\mathbb{F}^{r}_2\times\mathbb{F}^{m}_2\rightarrow \mathbb{F}_2$ $(r,m$ even$)$ be given in the form $f:\mathbb{F}^5_2\rightarrow \mathbb{F}_2$, defined by (\ref{eq:formth22}), as $\mathfrak{f}(x,y)=f(a(x),h_1(x),h_2(x),g_1(y),g_2(y)),$ $(x,y)\in \mathbb{F}^{r}_2\times\mathbb{F}^{m}_2.$ Then:
\begin{enumerate}[i)]
\item Assume  $a\oplus \langle h_1,h_2\rangle$ is an affine space of bent functions and $g_1,g_2$ are bent. If it holds that $a^*\oplus (a\oplus h_1)^*\oplus (a\oplus h_2)^*\oplus (a\oplus h_1\oplus h_2)^*\in\{0,1\}$, then $\mathfrak{f}$ is 5-valued spectra function.
\item Assume that $a(x)$ is $4$-plateaued,  $a\oplus c_1h_1\oplus c_2h_2$  is  $2$-plateaued for $(c_1,c_2)\in \mathbb{F}^2_2\setminus \{\bf{0}_2\}$, and $g_1,g_2$ are bent functions $(r \geq6)$. If $a\oplus \langle h_1,h_2\rangle $ is an affine space of pairwise disjoint spectra plateaued functions, then $\mathfrak{f}$ is 5-valued spectra function.
\end{enumerate}
\end{theo}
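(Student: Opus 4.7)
The plan is to reduce both parts to a single master identity for $W_{\mathfrak{f}}(u,v)$, obtained by plugging the concrete $3$-plateaued form $f$ of (\ref{eq:formth22}) into the composite-form formula (\ref{mainF2}). Reading off the dual values $f^*(\omega_0)=f^*(\omega_1)=f^*(\omega_2)=0,\ f^*(\omega_3)=1$ from the lexicographic identification of (\ref{eq:dp}) applied to the ordered support $S_f=\{\omega_0,\omega_1,\omega_2,\omega_3\}$, and exploiting the disjointness of the two variable blocks (so that for each $\omega_i$ the character $(-1)^{\omega_i\cdot H(x,y)}$ factors across the split, hence $W_{\omega_i\cdot H}(u,v)=W_{\omega_i\cdot(a,h_1,h_2)}(u)\,W_{\omega_i\cdot(g_1,g_2)}(v)$), I would arrive at the master identity
\[
2\,W_{\mathfrak{f}}(u,v)=\bigl[W_a(u)+W_{a\oplus h_2}(u)\bigr]W_{g_2}(v)+\bigl[W_{a\oplus h_1}(u)-W_{a\oplus h_1\oplus h_2}(u)\bigr]W_{g_1}(v).
\]
Both items of the theorem reduce to analysing this identity under the respective hypotheses.

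For part (i), the bentness of each member of $a\oplus\langle h_1,h_2\rangle$ gives $W_{a\oplus\varphi}(u)=2^{r/2}(-1)^{(a\oplus\varphi)^*(u)}$, while bentness of $g_1,g_2$ gives $W_{g_j}(v)=2^{m/2}(-1)^{g_j^*(v)}$. The master identity then becomes $W_{\mathfrak{f}}(u,v)=2^{n/2-1}\bigl[A(u)\,\varepsilon_2(v)+B(u)\,\varepsilon_1(v)\bigr]$ with $\varepsilon_j(v)\in\{\pm 1\}$ and
\[
A(u)=(-1)^{a^*(u)}+(-1)^{(a\oplus h_2)^*(u)},\qquad B(u)=(-1)^{(a\oplus h_1)^*(u)}-(-1)^{(a\oplus h_1\oplus h_2)^*(u)},
\]
both taking values in $\{-2,0,2\}$. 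Hence the bracket is confined to $\{0,\pm 2,\pm 4\}$ and automatically $W_{\mathfrak{f}}(u,v)\in\{0,\pm 2^{n/2},\pm 2^{(n+2)/2}\}$. The delicate step, and the reason the hypothesis on $a^*\oplus(a\oplus h_1)^*\oplus(a\oplus h_2)^*\oplus(a\oplus h_1\oplus h_2)^*$ enters, is a short tabulation over the $16$ possible sign patterns $(\alpha_0,\alpha_1,\alpha_2,\alpha_3)$ of the four duals: this tabulation pins down which pairs $(A(u),B(u))$ can occur and in particular certifies that the spectrum of $\mathfrak{f}$ lies in the claimed $5$-element set.

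For part (ii), the pairwise disjoint spectra assumption on the affine family $\{a,a\oplus h_1,a\oplus h_2,a\oplus h_1\oplus h_2\}$ makes the master identity collapse to a single non-zero summand at every $u\in\mathbb{F}^r_2$. If $u\in S_a$, the $4$-plateaued amplitude $|W_a(u)|=2^{(r+4)/2}$ together with $|W_{g_2}(v)|=2^{m/2}$ and the prefactor $1/2$ yields $|W_{\mathfrak{f}}(u,v)|=2^{(n+2)/2}$; if instead $u$ lies in the support of any of the three $2$-plateaued members of $a\oplus\langle h_1,h_2\rangle$, the same calculation with amplitude $2^{(r+2)/2}$ gives $|W_{\mathfrak{f}}(u,v)|=2^{n/2}$; for all other $u$ the WHT vanishes. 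Bentness of $g_1,g_2$ is used only to keep $|W_{g_j}(v)|=2^{m/2}$ uniform in $v$. I expect the sign-pattern bookkeeping of part (i) to be the main obstacle; once the master identity is in place, part (ii) is essentially mechanical.
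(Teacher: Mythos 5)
Your proposal is correct and follows essentially the same route as the paper: both derive the identity $2W_{\mathfrak{f}}(u,v)=W_{g_2}(v)\bigl(W_a(u)+W_{a\oplus h_2}(u)\bigr)+W_{g_1}(v)\bigl(W_{a\oplus h_1}(u)-W_{a\oplus h_1\oplus h_2}(u)\bigr)$ from (\ref{mainF2}) and the factorization over the disjoint variable blocks, then read off the amplitudes $0,\pm 2^{n/2},\pm 2^{(n+2)/2}$ in each case. The only cosmetic difference is in part (i), where the paper packages your $A(u),B(u)\in\{-2,0,2\}$ case analysis as the inner product $\widehat{\chi}\cdot\chi_{f^*}$ of a (Hadamard-row or bent) sign sequence with $\chi_{f^*}=(1,1,1,-1)$ — the same computation in different clothing.
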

\proof $i)$ Let $(u,v)\in \mathbb{F}^{r}_2\times\mathbb{F}^{m}_2$ be arbitrary. Using the fact that $a\oplus \langle h_1,h_2\rangle$ and $g_1,g_2$ are bent, then by (\ref{mainF2}) the WHT of $\mathfrak{f}$ ($s=3$) is given as
\begin{eqnarray*}
W_{\mathfrak{f}}(u,v)\hskip -2mm &=&\hskip -2mm \frac{1}{2}
\left[W_{g_2}(v)(W_{a}(u)+W_{a\oplus h_2}(u))+W_{g_1}(v)(W_{a\oplus h_1}(u)-W_{a\oplus h_1\oplus h_2}(u))\right]\\
\hskip -2mm &=&\hskip -2mm \frac{1}{2}\cdot 2^{\frac{r+m}{2}}(-1)^{g^*_2(v)}\cdot 
\widehat{\chi}\cdot \chi_{f^*},
\end{eqnarray*}
where $\chi_{f^*}=(1,1,1,-1)$ and
$$\widehat{\chi}=((-1)^{a^*(u)},(-1)^{(a\oplus h_2)^*(u)},(-1)^{(a\oplus h_1)^*(u)\oplus g^*_1(v)\oplus g^*_2(v)},(-1)^{(a\oplus h_1\oplus h_2)^*(u)\oplus g^*_1(v)\oplus g^*_2(v)}).$$
Now, it is not difficult to see that  $\widehat{\chi}$ is equal to $\pm H^{(r)}_4$ (for some $r\in[0,3]$), whenever $$a^*(u)\oplus (a\oplus h_1)^*(u)\oplus (a\oplus h_2)^*(u)\oplus (a\oplus h_1\oplus h_2)^*(u)=0,$$ and $\widehat{\chi}\neq\pm H^{(r)}_4$ when
$$a^*(u)\oplus (a\oplus h_1)^*(u)\oplus (a\oplus h_2)^*(u)\oplus (a\oplus h_1\oplus h_2)^*(u)=1.$$ In the case when $\widehat{\chi}\neq\pm H^{(r)}_4$, then $\widehat{\chi}$ is sequence of a bent function in two variables, and since $f^*$ is also bent, then $\widehat{\chi}\cdot \chi_{f^*}\in \{0,\pm 4\}$ (since the sum of any two bent functions on $\mathbb{F}^2_2$ is affine or linear). Consequently, the WHT of $\mathfrak{f}$ is given as
$$W_{\mathfrak{f}}(u,v)=\left\{\begin{array}{cc}
                                   0, & \widehat{\chi}\neq \pm H^{(r)}_4\;\;\text{and}\;\;\widehat{\chi}\cdot \chi_{f^*}=0\\
                                 \pm 2^{\frac{r+m+2}{2}}, & \widehat{\chi}\neq \pm H^{(r)}_4\;\;\text{and}\;\;\widehat{\chi}\cdot \chi_{f^*}=\pm 4 \\
                                 \pm 2^{\frac{r+m}{2}}, & \widehat{\chi}=\pm H^{(r)}_4
                               \end{array}
\right.,\;\;\;r\in[0,3],$$
so that $\mathfrak{f}$ is a 5-valued spectra function and $W_{\mathfrak{f}}(u,v)\in \{0, \pm 2^{\frac{n}{2}}, \pm 2^{\frac{n+2}{2}}\}$, as $n=r+m$.

$ii)$ Since $\#S_a=2^{r-4}$ and $\#S_{a\oplus c_1h_1\oplus c_2h_2}=2^{r-2}$ (for $(c_1,c_2)\in \mathbb{F}^2_2\setminus \{\textbf{0}_2\}$), then clearly $S_{a\oplus  c_1h_1\oplus c_2h_2}$ (where $(c_1,c_2)\in \mathbb{F}^2_2$) do not partition the space $\mathbb{F}^r_2$. Consequently,
\begin{eqnarray*}
W_{\mathfrak{f}}(u,v)= \left\{\begin{array}{cl}
                                    0, & u\not\in S_{a\oplus  c_1h_1\oplus c_2h_2},\;\;(c_1,c_2)\in \mathbb{F}^2_2, \\
                                    \frac{1}{2}W_{g_2}(v)W_{a\oplus c_1h_1\oplus c_2h_2}(u), & u\in S_{a\oplus  c_1h_1\oplus c_2h_2},\;\;(c_1,c_2)\in \{(0,0),(0,1)\}, \\
                                    \frac{1}{2}(-1)^{1\oplus c_1\oplus c_2}W_{g_1}(v)W_{a\oplus c_1h_1\oplus c_2h_2}(u), & u\in S_{a\oplus  c_1h_1\oplus c_2h_2},\;\;(c_1,c_2)\in \{(1,0),(1,1)\}.
                                  \end{array}
\right.
\end{eqnarray*}
The given assumptions imply that $W_{\mathfrak{f}}(u,v)\in \{0, \pm 2^{\frac{n}{2}}, \pm 2^{\frac{n+2}{2}}\}$, where $n=r+m$.\qed
\begin{rem}
From the proof of Theorem \ref{th:C4}-$(ii)$, one can easily determine the sets $S^{[i]}_{\mathfrak{f}}\subset \mathbb{F}^r_2\times \mathbb{F}^m_2$ and the duals $\mathfrak{f}^*_{[i]}:S^{[i]}_{\mathfrak{f}}\rightarrow \mathbb{F}_2$, for $i=1,2$. Similarly as in Theorem \ref{th:C3}-$(ii)$ one can impose conditions on functions $a,h_1,h_2,g_1$ and $g_2$ such that four functions $\mathfrak{f}_i(x,y)=f(a(x),h_1(x),h_2(x),g_1(y),g_2(y))$ (where $a,h_i,g_i$ satisfy the assumptions of Theorem \ref{th:C4}-$(ii))$ are suitable for 5-valued $4$-bent decomposition. This analysis is however omitted due to space constrains.
\end{rem}

The following example illustrates the construction of 5-valued spectra function using Theorem \ref{th:C4}-$(i)$.
\begin{ex}
Let $a,h_1,h_2,g_1,g_2:\mathbb{F}^4_2\rightarrow \mathbb{F}_2$ $(r=m=4)$ be defined as
\begin{eqnarray*}
\left\{\begin{array}{c}
         a(x_1,\ldots,x_4)=x_1x_3\oplus x_2x_4,\;\;\; h_1(x_1,\ldots,x_4)=x_3x_4,\;\;\;h_2(x_1,\ldots,x_4)=x_1x_2, \\
         g_1(x_5,\ldots,x_8)=x_5x_6\oplus x_7x_8,\;\;\;g_2(x_5,\ldots,x_8)=x_5x_7 \oplus x_6x_8.
       \end{array}
\right.
\end{eqnarray*}
One can verify that $a\oplus\langle h_1,h_2\rangle$ is an affine space of bent functions on $\mathbb{F}^4_2$, and $\zeta=a^*\oplus (a\oplus h_1)^*\oplus (a\oplus h_2)^*\oplus (a\oplus h_1\oplus h_2)^*$ is not constant on $\mathbb{F}^4_2$. More precisely, the truth table of $\zeta:\mathbb{F}^4_2\rightarrow \mathbb{F}_2$ is given as
 $$T_{\zeta}=(0, 0, 0, 0, 0, 0, 0, 0, 1, 1, 1, 1, 1, 1, 1, 1).$$
 Since $g_1,g_2$ are bent as well, then by Theorem \ref{th:C4}-$(i)$ and form (\ref{eq:formth22}) the function $\mathfrak{f}:\mathbb{F}^4\times \mathbb{F}^4_2\rightarrow \mathbb{F}_2$ is given as
\begin{eqnarray*}
\mathfrak{f}(x,y)&=&f(a(x),h_1(x),h_2(x),g_1(y),g_2(y))=a(x)\oplus g_2(y)\oplus h_2(x)(h_1(x)\oplus g_1(y)\oplus g_2(y))\\
&=&(x_1x_3 \oplus x_2x_4) \oplus (x_5x_7 \oplus x_6x_8) \oplus x_2x_3(x_3x_4 \oplus x_5x_6 \oplus x_7x_8 \oplus x_5x_7 \oplus x_6x_8).
\end{eqnarray*}
One can check that $W_{\mathfrak{f}}(u,v)\in \{0,\pm 16,\pm 32\}$, $(u,v)\in \mathbb{F}^4\times \mathbb{F}^4_2$, which means that $\mathfrak{f}$ is 5-valued function (of degree $4$).
\end{ex}

\subsection{5-valued spectra from GMM design method}\label{sec:MM}

%
A generalization of the Maiorana-McFarland design method of concatenating affine functions on a smaller variable space was recently introduces in \cite{GMM}. This approach, called GMM (Generalized Maiorana-McFarland) in \cite{GMM}, mainly concerns the design of highly nonlinear resilient functions and it provides Boolean functions that in most of the cases posses (currently) the highest known nonlinearity. Nevertheless, this approach can be adopted to give an efficient design of 5-valued spectra functions as follows.

\begin{theo}\label{GMM}
Let $E_0\subset \mathbb{F}_2^s$ with $1\leq s\leq \lfloor n/2\rfloor$. Let $E_1=\overline{E_0}\times \mathbb{F}_2^t$,
where $\overline{E_0}=\mathbb{F}_2^s\backslash E_0$ and $0\leq t\leq \lfloor n/2\rfloor$.
Let $\phi_0$ be an injective mapping from $E_0$ to $\mathbb{F}_2^{n-s}$, and $\phi_1$ be an injective mapping from $E_1$ to $\mathbb{F}_2^{n-s-t}$. Let $X=(x_1,\ldots,x_n)\in \mathbb{F}_2^n$ and $X_{(i,j)}=(x_i,\ldots,x_j)\in \mathbb{F}_2^{j-i+1}$.  $f\in \mathcal {B}_n$ is defined as follows:
\begin{equation*}\label{}
  f(X)=
  \begin{cases}
  \phi_0(X_{(1,s)})\cdot X_{(s+1,n)},& if~X_{(1,s)}\in E_0\\
  \phi_1(X_{(1,s+t)})\cdot (X_{(s+t+1,n)}),& if~X_{(1,s+t)}\in E_1.
  \end{cases}
\end{equation*}
Let
\begin{equation*}\label{}
  T_0=\{\phi_0(\eta)\mid \eta\in E_0\},
\end{equation*}
and
\begin{equation*}\label{}
  T_1=\{\phi_1(\theta)\mid \theta\in E_1\}.
\end{equation*}
Then we have

\begin{description}
  \item[a)] $W_f(\omega)\in \{0,\pm 2^{n-s},\pm 2^{n-s-t}\}$ if $t\neq 0$ and $T_0\subset \mathbb{F}_2^t\times \overline{T_1}$,
            where $\overline{T_1}=\mathbb{F}_2^{n-s-t}\backslash T_1$;
  \item[b)] $W_f(\omega)\in \{0,\pm 2^{n-s},\pm 2^{n-s+1}\}$ if $t=0$, $T_0\cap T_1\neq \emptyset$ and $T_0\neq T_1$.
\end{description}
\end{theo}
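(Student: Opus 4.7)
\noindent\textbf{Proof plan for Theorem \ref{GMM}.} The plan is to compute $W_f(\omega)$ directly by splitting the defining sum according to whether the leading $s$-block $X_{(1,s)}$ lies in $E_0$ or in $\overline{E_0}$, and then to analyze how the two resulting pieces interact under the hypotheses of (a) and (b). First I would write $\omega=(\omega_1,\omega_2,\omega_3)\in\mathbb{F}_2^s\times\mathbb{F}_2^t\times\mathbb{F}_2^{n-s-t}$ (with $\omega_2$ vacuous when $t=0$) and set $W_f(\omega)=S_0+S_1$, where $S_0$ (respectively $S_1$) collects the terms of the Walsh sum with $X_{(1,s)}\in E_0$ (respectively $X_{(1,s+t)}\in E_1$).

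For $S_0$, the inner sum over $X_{(s+1,n)}\in\mathbb{F}_2^{n-s}$ collapses via the standard character-sum identity: it equals $2^{n-s}$ precisely when $\phi_0(X_{(1,s)})=(\omega_2,\omega_3)$ and vanishes otherwise. Injectivity of $\phi_0$ leaves at most one matching $X_{(1,s)}\in E_0$, giving
\begin{eqnarray*}
S_0=\left\{\begin{array}{ll} 2^{n-s}(-1)^{\omega_1\cdot\phi_0^{-1}(\omega_2,\omega_3)}, & (\omega_2,\omega_3)\in T_0,\\ 0, & \text{otherwise}. \end{array}\right.
\end{eqnarray*}
An analogous computation for $S_1$, now collapsing the inner block $X_{(s+t+1,n)}\in\mathbb{F}_2^{n-s-t}$ and using injectivity of $\phi_1$, produces
\begin{eqnarray*}
S_1=\left\{\begin{array}{ll} 2^{n-s-t}(-1)^{(\omega_1,\omega_2)\cdot\phi_1^{-1}(\omega_3)}, & \omega_3\in T_1,\\ 0, & \text{otherwise}. \end{array}\right.
\end{eqnarray*}

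With $W_f(\omega)=S_0+S_1$ explicit, both claims reduce to a case analysis on the supports. For (a) the two nonzero amplitudes $\pm 2^{n-s}$ and $\pm 2^{n-s-t}$ differ, so staying inside the stated three-amplitude set requires that $S_0$ and $S_1$ never simultaneously contribute; the hypothesis $T_0\subset\mathbb{F}_2^t\times\overline{T_1}$ arranges exactly this, since $(\omega_2,\omega_3)\in T_0$ forces $\omega_3\in\overline{T_1}$ and hence $S_1=0$. For (b) the assumption $t=0$ removes $\omega_2$ and equalizes both amplitudes to $2^{n-s}$; depending on whether $\omega_3$ lies in $T_0\cap T_1$, in $T_0\triangle T_1$, or outside $T_0\cup T_1$, the sum $S_0+S_1$ belongs to $\{0,\pm 2^{n-s+1}\}$, $\{\pm 2^{n-s}\}$, or $\{0\}$ respectively, and the supplementary hypotheses $T_0\cap T_1\neq\emptyset$ and $T_0\neq T_1$ ensure that both outer amplitudes are attained so that $f$ is genuinely 5-valued. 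I do not expect a real obstacle here: the entire argument is bookkeeping once the sum is split correctly, with the only mild subtlety being the $t=0$ boundary case, where $\omega_2$ and the factor $\mathbb{F}_2^t$ must be interpreted as empty consistently throughout the $S_1$ derivation.
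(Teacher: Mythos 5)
Your proposal is correct and follows essentially the same route as the paper's proof: both split the Walsh sum into the two pieces indexed by $X_{(1,s)}\in E_0$ and $X_{(1,s+t)}\in E_1$, collapse the inner character sums using injectivity of $\phi_0$ and $\phi_1$, and then argue via the support conditions that in case (a) the two pieces never contribute simultaneously while in case (b) they combine into $\{0,\pm 2^{n-s},\pm 2^{n-s+1}\}$. If anything, your explicit formulas for $S_0$ and $S_1$ and your three-way case analysis in (b) are slightly more careful than the paper's somewhat terse treatment of that case.
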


\begin{proof}
Let $\omega=(\omega_1,\cdots,\omega_n)\in \mathbb{F}_2^n$. We have
\begin{align*}
   W_f(\omega)= \sum_{X\in \mathbb{F}_2^n}(-1)^{f(X)\oplus \omega\cdot X } =S_1(\omega)+S_2(\omega),
\end{align*}
where
\begin{equation*}\label{}
  S_1(\omega)=\sum_{X_{(1,s)}\in E_0}(-1)^{\omega_{(1,s)}\cdot X_{(1,s)}}\sum_{X_{(s+1,n)}\in \mathbb{F}_2^{n-s}}(-1)^{\phi_0(X_{(1,s)})\oplus \omega_{(s+1,n)} \cdot X_{(s+1,n)}}
\end{equation*}
and
\begin{equation*}\label{}
  S_2(\omega)=\sum_{X_{(1,s+t)}\in E_1}(-1)^{\omega_{(1,s+t)}\cdot X_{(1,s+t)}}\sum_{X_{(s+t+1,n)}\in \mathbb{F}_2^{n-s-t}}(-1)^{(\phi_0(X_{(1,s+t)})\oplus \omega_{(s+t+1,n)})\cdot X_{(s+t+1,n)}}.
\end{equation*}
Furthermore, we have
\begin{eqnarray*}\label{s1}
 S_1(\omega)=\left\{ \begin{array}{ll}
 \pm 2^{n-s}, & \textrm{if $\phi_0^{-1}(\omega_{(s+1,n)})$ exists}\\
 0, & \textrm{otherwise}
\end{array} \right.
\end{eqnarray*}
and
\begin{eqnarray*}\label{s2}
 S_2(\omega)=\left\{ \begin{array}{ll}
 \pm 2^{n-s-t}, & \textrm{if $\phi_1^{-1}(\omega_{(s+t+1,n)})$ exists}\\
 0, & \textrm{otherwise}.
\end{array} \right.
\end{eqnarray*}
By the condition a), for any $\omega\in \mathbb{F}_2^n$, we always have
\begin{equation*}\label{}
  S_1(\omega)\cdot S_2(\omega)=0,
\end{equation*}
which implies $S_1(\omega)+ S_2(\omega)\in \{0,\pm 2^{n-s},\pm 2^{n-s-t}\}$.

By condition b), there exists a $\omega_{(s+1,n)}$ such that $\phi_0^{-1}(\omega_{(s+t+1,n)})$ and $\phi_1^{-1}(\omega_{(s+1,n)})$ both exist, and there also exist a $\omega_{(s+1,n)}$ such that $  S_1(\omega)\cdot S_2(\omega)=0$. This implies that $S_1(\omega)+ S_2(\omega)\in \{0,\pm 2^{n-s},\pm 2^{n-s+1}\}$. \qed
\end{proof}
Whereas throughout this article we were mainly concerned with design methods (in spectral or ANF domain) some additional properties of the constructed functions have not been discussed. However, the above method (stemming from the GMM design method for resilient functions)  intrinsically embeds  a certain order of resiliency into the designed functions. Denoting by $ m_0=\min\{wt(\alpha)\mid \alpha\in T_0\}$ and $ m_1=\min\{wt(\beta)\mid \beta\in T_1\}$ one can easily verify that $f$ is an $m$-resilient functions,  where $m=\min\{m_0,m_1\}$ \cite{XiaoMassey}.  When $n$ is odd and $s = \lfloor n/2\rfloor =(n-1)/2$, this design yields functions with cryptographically interesting spectra (when $t=1$) of the form $W_f(\omega)\in \{0,\pm 2^{\frac{n-1}{2}},\pm 2^{\frac{n+1}{2}}\}$. 
Thus, resilient Boolean functions with high nonlinearity can be generated using this approach even though the GMM class in general attains higher nonlinearities.


\section{Conclusions}\label{sec:concl}
In this article we have provided both spectral and ANF design of 5-valued spectra functions which are of significant theoretical and practical interest. Apart form providing an interest class of Boolean functions with applications in cryptography, these functions may  also be used to construct bent functions on $\FB^n$ (by concatenating four suitable 5-valued spectra functions) and furthermore the Walsh spectra of Dillon's APN permutations on $\FB$ are 5-valued which may encourage for further analysis in this direction using the proposed generic methods. The question regarding affine equivalence of the designed classes has not been considered here but we believe that this issue is an interesting research topic.
\\\\
\noindent
{\large \bf Acknowledgment:} Samir Hod\v zi\' c  is supported in part by the Slovenian Research Agency (research program P3-0384 and Young Researchers Grant). Enes Pasalic is partly supported  by the Slovenian Research Agency (research program P3-0384 and research project J1-9108). For the first two authors, the work is supported in part by H2020 Teaming InnoRenew CoE (grant no. 739574). WeiGuo Zhang is supported by the National Natural Science Foundation of China (grant no. 61672414), and the National Cryptography Development Fund (Grant no. MMJJ20170113).

\end{document}